\definecolor{DarkBlue}{RGB}{0,0,150}
\theoremstyle{definition}
\newtheorem{definition}{Definition}
\newtheorem{corollary}{Corollary}
\newtheorem{proposition}{Proposition}
\newtheorem{lemma}{Lemma}
\def\argmax{\ensuremath{\mathrm{argmax}}}
\newcommand{\MM}{\text{SPM}}
\DeclareMathOperator{\E}{\mathbb{E}}
\newcommand{\D}{\mathcal{D}}
\newcommand{\vals}{\mathbf{v}}
\newcommand{\alloc}{\mathbf{x}}
\newcommand{\p}{\boldsymbol{\mathbf{\tau}}}
\begin{document}
	
\title{Reinforcement Learning of Sequential Price Mechanisms\footnote{Author order is alphabetical. This is the extended version of \citet{brero2021rlspm}, which was presented at the Thirty-Fifth AAAI Conference on Artificial Intelligence (AAAI'21). This research is funded in part by Defense Advanced Research Projects Agency under Cooperative Agreement HR00111920029. The content of the information does not necessarily reflect the position or the policy of the Government, and no official endorsement should be inferred. This is approved for public release; distribution is unlimited. The work of G. Brero was also supported by the SNSF (Swiss National Science Foundation) under Fellowship P2ZHP1\_191253.}}

\author[a]{Gianluca Brero}
\author[a]{Alon Eden}
\author[a]{Matthias Gerstgrasser}
\author[a]{David C.~Parkes}
\author[a]{\\Duncan Rheingans-Yoo}

\affil[a]{Harvard University \authorcr \texttt{gbrero,aloneden,matthias,parkes@g.harvard.edu}, \texttt{d.rheingansyoo@gmail.com}}

\date{\today}

\maketitle

\begin{abstract}
We introduce the use of reinforcement learning for indirect mechanisms, working with the existing class of {\em sequential price mechanisms}, which generalizes both serial dictatorship and posted price mechanisms and essentially characterizes all strongly obviously strategyproof mechanisms. Learning an optimal mechanism within this
class forms a partially-observable Markov decision process. We provide rigorous conditions for when this class of mechanisms is more powerful than simpler static mechanisms, for sufficiency or insufficiency of observation statistics for learning, and for the necessity of complex (deep) policies. We show that our approach can learn optimal or near-optimal mechanisms in several experimental settings.
\end{abstract}

\section{Introduction}\label{sec:intro}
Over the last fifty years, a large body of research in microeconomics has introduced many different mechanisms for resource allocation. Despite the wide variety of available options, ``simple'' mechanisms such as \textit{posted price} and \textit{serial dictatorship} are often preferred for practical applications, including housing allocation \citep{abdulkadirouglu1998random},
online procurement \citep{badanidiyuru2012learning}, or allocation of medical appointments \citep{klaus2019serial}.

There has also been considerable interest in formalizing different notions of simplicity. \citet{li2017obviously} identifies mechanisms that are particularly simple from a strategic perspective, introducing the concept of \textit{obviously strategyproof mechanisms}. These are mechanisms in which it is obvious that an agent cannot profit by trying to game the system, as even the worst possible final outcome from behaving truthfully is at least as good as the best possible outcome from any other strategy. More recently,~\citet{pycia2019theory} introduce the still stronger concept of \textit{strongly obviously strategyproof} (SOSP) mechanisms, and show that this class is essentially equivalent to the \textit{sequential price mechanisms}, where agents are visited in turn and offered a choice from a menu (which may or may not include transfers). SOSP  mechanisms are ones in which an agent is not even required to consider her future (truthful) actions to understand that a mechanism is obviously strategyproof.

Despite being simple to use, designing optimal sequential price mechanisms can be a hard task, even when targeting common objectives, such as maximum welfare or maximum revenue. For example, in unit-demand settings with multiple items, the problem of computing prices that maximize expected revenue given discrete prior distributions on buyer values is NP-hard~\citep{chen2014complexity}. More recently, \citet{AgrawalSZ20} showed a similar result for the problem of determining an optimal order in which agents will be visited when selling a single item using posted price mechanisms.

\paragraph{Our Contribution.}\label{sec:contribution}
In this paper, we  introduce the first use of reinforcement learning (RL) for the design of indirect mechanisms, applying RL to the  design of optimal sequential price mechanisms ($\MM$s), and  demonstrate its effectiveness across a wide range of settings with different economic features.
We  generally  focus on mechanisms that optimize expected welfare. However, the framework is completely flexible, allowing for different objectives, and in addition to welfare, we  illustrate its use for max-min fairness and revenue.

The problem of learning an optimal $\MM$ is formulated as  a {\em partially observable Markov decision process} (POMDP). In this POMDP, the environment (i.e., the state, transitions, and rewards) models the economic setting, and the  policy, which observes   purchases and selects the next agent and prices based on those observations, encodes the mechanism rules. Solving for an optimal policy is equivalent to solving the mechanism design problem.
For the $\MM$ class, we can directly simulate agent behavior as part of the environment since there is a  dominant-strategy equilibrium.
We give requirements on the statistic of the history of observations needed to support an optimal policy and show that this statistic can be succinctly represented in the number of items and agents. {We also show that non-linear policies based on these statistics may be necessary to increase welfare. Accordingly, we use deep-RL algorithms to learn mechanisms.}

The theoretical results provide rigorous conditions for when $\MM$s are more powerful than simpler static mechanisms, providing a new understanding of this class of mechanisms. We show that for all but the simplest settings, adjusting the posted prices and the order in which agents are visited based on prior purchases improves welfare outcomes.
Lastly, we report on a comprehensive set of experimental results for the {\em Proximal Policy Optimization} (PPO) algorithm \citep{schulman2017proximal}.
We  consider a range of settings, from simple to more  intricate, that serve to illustrate  our theoretical results as well as generally demonstrate the performance of PPO, as well as the relative performance of $\MM$s in comparison to simple static mechanisms.

\paragraph{Further Related Work.}
\label{sec:literature}
Economic mechanisms based on sequential posted prices have been studied since the early 2000s. \citet{sandholm2003sequences} study {\em take-it-or-leave-it auctions} for a single item, visiting buyers in turn and making them offers. They introduced a linear-time algorithm that, in specific settings with two buyers, computes an optimal sequence of offers to maximize  revenue.
More recently, building on the prophet inequality literature, \citet{kleinberg2012matroid}, \citet{FeldmanGL15}, and \citet{dutting2016posted} derived different welfare and revenue guarantees for posted prices mechanisms for combinatorial auctions.
\citet{klaus2019serial} studied $\MM$s in settings with homogeneous items, showing that they satisfy many desirable properties in addition to being strategyproof.

Another related research thread is that of \textit{automated mechanism design} (AMD) \citep{ConitzerS02,conitzer2004self}, which seeks to use algorithms to design mechanisms.
Machine learning has been used for the design of direct mechanisms~\citep{dutting2015payment,narasimhan2016automated,Duetting0NPR19,GolowichNP18}, including sample complexity results~\cite[e.g]{ColeR14,GonczarowskiW18}.  There have also been important theoretical advances, identifying polynomial-time algorithms for direct-revelation, revenue-optimal mechanisms~\citep[e.g.]{cai2012algorithmic,cai2012optimal,cai2013understanding}.

Despite this rich research thread on direct mechanisms,
the use of AMD for indirect mechanisms is less well understood. Indirect mechanisms have an imperative nature (e.g., sequential, or multi-round), and may involve richer strategic behaviors. Machine learning has been used to realize indirect versions of mechanisms such as the VCG mechanism, or together with assumptions of truthful responses~\citep{lahaie2004applying,blum2004preference,brero2019machine}. Situated towards finding clearing prices for combinatorial auctions, the work by \citet{brero2019fast} involves inference about the  valuations of agents via Bayesian approaches.

Related to RL, but otherwise quite different from our setting,~\citet{ShenPLZQHGDLT20} study the design of reserve prices in repeated ad auctions, i.e., {\em direct} mechanisms, using an MDP framework to model the interaction between pricing and agent response across multiple instantiations of a mechanism (whereas, we  use a POMDP, enabling value inference across the rounds of a single $\MM$). This use of RL and MDPs for the design of repeated mechanisms has also been considered for  matching buyer impressions to sellers on platforms such as Taobao~\citep{Tang17a,CaiFTZ18}.

\section{Preliminaries}\label{sec:prelim}

\paragraph{Economic Framework.} \label{sec:model}

There are $n$ agents and $m$ indivisible items. Let $[n]=\{1,\ldots,n\}$ be the set of agents and $[m]$ be the set of items.
Agents have a  valuation function $v_i:2^{[m]}\rightarrow \mathbb{R}_{\ge 0}$ that maps bundles of items to a real value. As a special case, a {\em unit-demand valuation} is one in which  an agent has a value for each item, and the value for a bundle is the maximum value for an item in the bundle. Let $\vals=(v_1,\ldots,v_n)$  denote the valuation profile. We assume  $\vals$ is sampled from a possibly correlated value distribution $\D$. The designer can access this distribution $\D$  through samples from the  joint distribution.

An {\em allocation} $\alloc=(x_1,\ldots,x_n)$ is a profile of disjoint bundles of items ($x_i\cap x_j=\emptyset$ for every $i\neq j\in [n]$), where $x_i\subseteq [m]$ is the set of items allocated to agent $i$.

An {\em economic mechanism} interacts with agents and determines an outcome, i.e., an allocation $\alloc$ and transfers (payments) $\p=(\tau_1,\ldots,\tau_n)$, where $\tau_i \ge 0$ is the payment by agent $i$.
We measure the performance of a mechanism outcome $(\alloc, \p)$ under valuation profile $\vals$ via an objective function $\mathsf{g}(\alloc,\p;\vals)$.

\medskip
\noindent\textbf{Our goal:} Design a mechanism whose outcome maximizes the \textit{expected value} the objective function with respect to the value distribution.

\medskip
Our framework allows for different objectives such as:
\begin{itemize}
	\item social welfare: $\mathsf{g}(\alloc,\p;\vals)=\sum_{i\in [n]} v_i(x_i)$,
	\item revenue: $\mathsf{g}(\alloc,\p;\vals)=\sum_{i\in [n]} \tau_i$, and
	\item max-min fairness: $\mathsf{g}(\alloc,\p;\vals)=\min_{i\in [n]}v_i(x_i)$.
\end{itemize}

\paragraph{Sequential Price Mechanisms.}
We  study the family of $\MM$s. An $\MM$  interacts with agents across rounds, $t\in\{1,2,\ldots\}$, and  visits a different agent in each round. At the end of round $t$, the mechanism maintains the following parameters: a {\em temporary allocation} $\alloc^t$ of the first $t$ agents visited, a {\em temporary payment profile} $\p^t$, and a \textit{residual setting} $\rho^t = (\rho^t_\text{agents}, \rho^t_\text{items})$ where $\rho^t_\text{agents} \subseteq [n]$ and $\rho^t_\text{items} \subseteq [m]$ are the set of agents yet to be visited and items still available, respectively.
In each round $t$, (1) the mechanism picks an agent $i^t\in \rho^{t-1}_{\text{agents}}$ and posts a price $p_j^t$ for each available item $j\in \rho^{t-1}_\text{items}$; (2) agent $i^t$ selects a bundle $x^t$ from the set of available items and is charged payment $\sum_{j\in x^t}p^{t}_j$; (3) the remaining items, remaining agents, temporary allocation, and temporary payment profile are all updated accordingly. Here, it  is convenient to initialize with  $\rho^0_\text{agents}=[n], \rho^0_\text{items}=[m], \alloc^t=(\emptyset,\ldots,\emptyset)$ and $\p^0 = (0,\ldots,0)$.

\paragraph{Learning Framework.} \label{sec:pomdp}

The sequential nature of  $\MM$s, as well as the private nature of agents' valuations,
makes it useful to formulate this problem of automated mechanism design as a
{\em partially observable Markov decision process} (POMDP).
A POMDP \citep{kaelbling1998planning} is an MDP (given by a state space $\mathcal S$, an action space $\mathcal A$, a Markovian state-action-state transition probability function $\mathbb P(s';s,a)$, and a reward function $r(s,a)$), together with a possibly stochastic mapping from each action and resulting state to observations  $o$ given by $\mathbb P(o;s',a)$.

For $\MM$s, the state corresponds to the items still unallocated, agents not yet visited, a partial allocation, and valuation functions of agents.
An action  determines which agent to go to next and what prices to set. This leads to a new state and observation, namely the item(s) picked by the agent.  In this way,
the state transition is governed by  agent strategies, i.e., the
dominant-strategy equilibrium of $\MM$s. A policy  defines  the rules of the mechanism. An optimal policy for a suitably defined reward function corresponds to an optimal mechanism.
Solving POMDPs requires reasoning about the {\em belief state}, i.e., the belief about the distribution on states given a history of observations. A typical approach  is to find a {\em sufficient statistic} for the belief state, with policies defined as mappings from this statistic to  actions.

\section{Characterization Results}\label{sec:mechanisms}

In $\MM$s, the outcomes from previous rounds can be used to decide which agent to visit and what prices to set in the current round. This allows  prices to be personalized and adaptive, and it also allows the order in which agents are visited to be adaptive.
We next introduce some special cases.
\begin{definition}[Anonymous static price (ASP) mechanisms]
	Prices are set at the beginning (in a potentially random way) and are the same across rounds and for every agent.
\end{definition}
An example of a mechanism in the ASP class is the static pricing mechanism in \citet{FeldmanGL15}.
\begin{definition}[Personalized static price (PSP) mechanisms]
	Prices are set at the beginning (in a potentially random way) and are the same across rounds, but each agent might face different prices.
\end{definition}
Beyond prices, we are also interested in the order in which agents are selected by the mechanism:
\begin{definition}[Static order (SO) mechanisms]
	The order is set at the beginning (in a potentially random way) and does not change across rounds.
\end{definition}

We illustrate the relationship between the various mechanism classes in Figure~\ref{taxonomy}.

The ASP class is a subset of the PSP class, which is a subset of $\MM$.\footnote{As with PSP mechanisms, there exist ASP mechanisms that can  take useful advantage of adaptive order (while holding prices fixed); see Proposition \ref{dynamic_order_corr}.}
Serial dictatorship (SD) mechanisms are a subset of ASP (all payments are set to zero) and may have adaptive or static order. The  {\em random serial dictatorship mechanism} (RSD)~\citep{abdulkadirouglu1998random} lies in the intersection of SD and static order (SO).

\newcommand{\flowFigFontSize}{\footnotesize}
\newcommand{\flowFigCrlSize}{30mm}
\newcommand{\flowFigLabelDist}{-.2cm}
\newcommand{\circlesize}{10mm}
\begin{figure}[t!]
	\centering
	\begin{tikzpicture}
	[ font = \flowFigFontSize, line width=1pt,
	my circle0/.style={minimum width=\flowFigCrlSize*.85, circle, draw},
	my circle1/.style={minimum width=\flowFigCrlSize*.75, circle, draw},
	my circle2/.style={minimum width=\flowFigCrlSize*.55, circle, draw},
	my circle/.style={minimum width=\flowFigCrlSize * 1.5, circle, draw},
	my circle3/.style={minimum width=\flowFigCrlSize * 2, circle, draw},
	my label/.style={left = 0, anchor=mid}
	]
	\node [label = PSP] (1) [my circle, left = -.5 * \flowFigCrlSize] {};
	\node [label = SO] (2) [my circle, right = -.5 * \flowFigCrlSize] {};
	\node [label={[label distance=\flowFigLabelDist]125:ASP}] (3) [my circle0, left = -.05*\flowFigCrlSize] {};
	\node [label={[label distance=\flowFigLabelDist]125:SD}] (5) [my circle2, above= -\flowFigCrlSize*.85 of 3] {};
	\node [label=SPM] (6) [my circle3] {};
	\node[label={[label distance=\flowFigLabelDist]90:RSD}] (7) at (-\flowFigCrlSize*.375,-\flowFigCrlSize*.425+\flowFigCrlSize*.275) [] {\textbullet};
	\end{tikzpicture}
	\caption{The Sequential Price Mechanism ($\MM$) Taxonomy. \label{taxonomy}}
\end{figure}
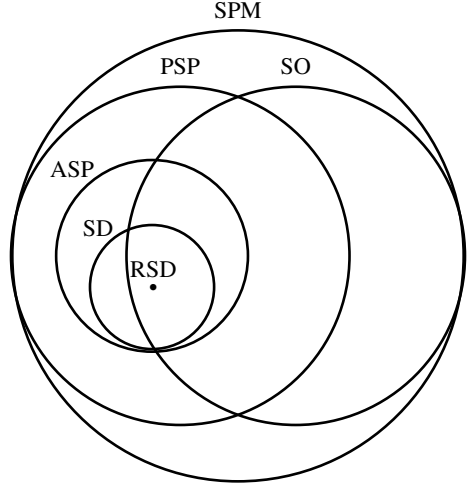

\subsection{The Need for Personalized Prices and Adaptiveness}
\label{sec:th-static}
In this section, we show that  personalized prices  and adaptiveness are necessary for optimizing welfare,  even in surprisingly simple settings. This further motivates formulating the design problem as a POMDP and using RL methods to solve it. We return to the examples embodied in the proofs of these propositions in our experimental work.

Define a {\em welfare-optimal $\MM$} to be a mechanism that optimizes expected social welfare over the class of $\MM$s.
\begin{proposition}
	\label{personalized_price}
	There exists a setting with one item and two IID agents where the welfare-optimal $\MM$ mechanism must use personalized prices.
\end{proposition}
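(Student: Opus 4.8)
The plan is to exhibit one explicit instance and compare the best welfare attainable by an ASP mechanism with the best welfare attainable by an unrestricted $\MM$. Take one item and two IID agents whose common value distribution is supported on $\{L,H\}$ with $0<L<H$, say $\Pr[v_i=H]=q$ and $\Pr[v_i=L]=1-q$ for a fixed $q\in(0,1)$ (the choice $L=1$, $H=2$, $q=\tfrac12$ already suffices). The first-best welfare $\E[\max(v_1,v_2)]$ equals $qH(2-q)+(1-q)^2L$, since the maximum is $H$ unless both values are $L$. I would first show this is achieved by a personalized, static $\MM$: visit agent $1$ and post a price $p_1$ with $L<p_1<H$, so she takes the item exactly when $v_1=H$; if she declines, visit agent $2$ and post a price $p_2<L$, so he always takes the item. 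Its welfare is $qH+(1-q)\E[v_2]=qH(2-q)+(1-q)^2L$, matching the first-best.

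Next I would upper-bound the welfare of every ASP mechanism. For a fixed anonymous price $p$ there are three relevant cases: if $p\le L$, the item is always sold to the first agent visited, giving welfare $\E[v]=qH+(1-q)L$; if $L<p\le H$, the item goes to the first visited agent with value $H$ and is otherwise wasted, giving welfare $H\bigl(1-(1-q)^2\bigr)=qH(2-q)$; and if $p>H$ the welfare is $0$. Hence the best deterministic ASP yields $\max\{qH+(1-q)L,\,qH(2-q)\}$. Because welfare is linear in the price distribution, randomizing the common price does not help, and because the round-$1$ agent is chosen with no information while the two agents are symmetric, adapting the order does not help either. Comparing, $qH(2-q)+(1-q)^2L$ exceeds $qH(2-q)$ by $(1-q)^2L>0$ and exceeds $qH+(1-q)L$ by $q(1-q)(H-L)>0$; so every ASP mechanism is strictly worse than the $\MM$ above, hence strictly worse than any welfare-optimal $\MM$.

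Finally I would argue this separation forces personalization. Since the first agent is chosen with no information and the agents are symmetric, we may fix the order as agent $1$ then agent $2$; the mechanism is then determined by the (possibly random) price $p^1$ offered to agent $1$ and the (possibly history-dependent) price $p^2$ offered to agent $2$, with round $2$ reached only if agent $1$ declines. A mechanism that never personalizes offers the same price in both rounds and is therefore an ASP mechanism, which is suboptimal by the previous paragraph. Hence every welfare-optimal $\MM$ must, with positive probability, post $p^1\neq p^2$, i.e.\ use personalized prices.

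I expect the main obstacle to be not the economics --- screening for the high value needs a high first price while insuring against wasting the item needs a low last price, and one anonymous price cannot do both --- but the bookkeeping needed to make the ASP bound watertight: ruling out randomized prices, randomized or adaptive order, and boundary ties in agents' responses, and checking that ``same price in both rounds'' really does place a mechanism in the ASP class under the stated definitions.
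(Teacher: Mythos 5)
Your proposal is correct and takes essentially the same approach as the paper: the paper uses the same construction (one item, two IID agents with a two-point value distribution, here $\{1,3\}$), the same optimal mechanism (a price between $L$ and $H$ for the first agent, a price below $L$ for the second), and the same case analysis showing that any single anonymous price either risks allocating to a low-value agent or risks leaving the item unsold. Your version is slightly more general in the parameters and more explicit about ruling out randomized prices and adaptive order, but the core argument is identical.
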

\begin{proof}
	Consider a setting with one item and two IID agents where each has a valuation distributed uniformly on the set $\{1,3\}$. Note that it is WLOG to only consider prices of $0$ and $2$. One optimal mechanism first offers the item to agent $1$ at price $p^1=2$. Then, if the item remains available, the mechanism offers the item to agent $2$ at price $p^2=0$. No single price $p$ can achieve OPT. If $p=0$, the first agent visited might acquire the item when they have value $1$ and the other agent has value $3$. If $p=2$, the item will go unallocated if both agents have value $1$.
\end{proof}

Note that  an adaptive order would not eliminate the need for personalized prices in the example used in the proof of Proposition~\ref{personalized_price}. Interestingly,  we  need $\MM$s with adaptive prices even with IID agents and identical items.
\begin{restatable}{proposition}{dynamicprices}
	\label{dynamic_price}
	There exists a unit-demand setting with two identical items and three IID agents where the welfare-optimal $\MM$  must use adaptive prices.
\end{restatable}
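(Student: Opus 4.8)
The plan is to exhibit the claimed setting explicitly and argue two sides: (i) a specific \emph{adaptive} $\MM$ attains the offline-optimal expected welfare, and (ii) no static-price $\MM$ does. I would take two identical items and three IID agents, each with a unit-demand value drawn uniformly from $\{1,3\}$; write $\vals=(v_1,v_2,v_3)$ for the realization. Since the items are identical and the agents unit-demand, in each round the mechanism effectively posts a single price $p^t$, the visited agent takes an item iff her value is at least $p^t$, and the only observation is whether a purchase occurred --- never the value. For welfare it suffices to consider prices in $\{0,2,\infty\}$, since any price in $[0,1]$ acts like $0$, any price in $(1,3]$ acts like $2$, and any price above $3$ is a ``no sale''.

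For (i), I would analyze the $\MM$ that offers agent~$1$ the price $2$; if agent~$1$ buys (so $v_1=3$) it offers agent~$2$ the price $2$ and, if agent~$2$ declines, offers agent~$3$ the price $0$; if agent~$1$ declines (so $v_1=1$) it instead offers agents~$2$ and $3$ the price $0$. The decisive adaptive move is the round-$2$ price --- $2$ when one item remains (competition for the last item makes it worth rejecting a low bidder), $0$ when both remain. A check over the eight value profiles shows this $\MM$ always gives the two items to two highest-valued agents, so its expected welfare equals the offline optimum, $\tfrac{1}{8}(6+6+6+4+6+4+4+2)=4.75$; hence the welfare-optimal $\MM$ has value $4.75$.

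For (ii), I would show every static-price $\MM$ --- every mechanism in the PSP class, which contains ASP --- has expected welfare strictly below $4.75$. By the price reduction, each agent $i$ gets a price $q_i\in\{0,2,\infty\}$. If some $q_i=\infty$, that agent never buys and the mechanism reduces to a two-agent, two-item problem, of welfare at most $4$. Otherwise the $q_i$ form a multiset in $\{0,2\}^3$, and in each case I would point to a fixed positive-probability profile that is allocated sub-optimally: with at least two prices equal to $2$, the profile $\vals=(1,1,1)$ has at most one buyer and welfare at most $1<2$; with all three prices $0$, every visited agent buys regardless of value, so a fixed pair of agents receives the items (fixed, since purchases convey nothing), and the profile in which the remaining agent has value $3$ and the others value $1$ gets welfare $2<4$; and with prices $\{0,0,2\}$, on the two profiles in which the price-$2$ agent has value $3$ and exactly one price-$0$ agent has value $3$, the mechanism sees the identical transcript (every visited agent buys on both), hence produces the identical allocation, which is optimal for at most one of them. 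In each case a positive-probability profile is below its offline value and no profile ever exceeds it, so the expected welfare is $<4.75$; randomized static-price mechanisms are mixtures of deterministic ones and fare no better. I would conclude that any welfare-optimal $\MM$, having value $4.75$, lies outside the PSP class, i.e., must use adaptive prices.

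I expect the main obstacle to be part (ii): one must be confident that the static-price class is fully captured by the reduction to prices in $\{0,2,\infty\}$ together with the fact that only purchases, not values, are observed --- this is exactly what makes adaptive order useless as a substitute for adaptive prices and validates the $\{0,0,2\}$ indistinguishability step. The eight-profile check for the adaptive $\MM$ and the offline-optimum computation are routine bookkeeping.
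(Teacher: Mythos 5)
Your proposal is correct, and it uses exactly the paper's instance (three i.i.d.\ agents uniform on $\{1,3\}$, two identical items, prices reduced to $\{0,2\}$ up to the harmless $\infty$ option) and exactly the paper's optimal adaptive mechanism, with the same decisive move: the round-2 price is $2$ or $0$ according to whether the first agent bought. Where you diverge is in how you rule out static prices. The paper argues structurally that \emph{any} welfare-optimal policy is forced to the price histories $(2,0,0)$, $(2,2)$, $(2,2,0)$, and then observes that the first and third histories assign price $2$ to one versus two agents, so no static personalization can realize both; this is short but leans on an informal uniqueness claim about the optimal pricing rule. You instead compute the optimal value ($4.75$, the offline optimum) and enumerate all static price multisets, exhibiting for each a positive-probability profile that is misallocated --- including the nice indistinguishability argument for $\{0,0,2\}$, where two profiles produce identical purchase transcripts but have disjoint optimal allocations. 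Your route is longer but more self-contained and airtight: it does not require characterizing the optimal policy, only upper-bounding every static competitor, and it explicitly handles adaptive order and randomization, which the paper's version treats more casually. Both arguments are valid proofs of the proposition.
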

We provide a proof sketch, and defer the proof to Appendix~\ref{sec:missingproofsmechanisms}. The need for adaptive prices comes from the need to be responsive to the remaining supply of items after the decision of the first agent: (i) if this agent buys, then with one item and two agents left,  the optimal  price should be high enough to allocate the item to a high-value agent, alternatively (ii) if this agent does not buy,  subsequent prices should be low to ensure both remaining items are allocated.

The following proposition shows that  an adaptive order may be necessary, even when the optimal prices are anonymous and static.
\begin{restatable}{proposition}{dynamicordercorr}\label{dynamic_order_corr}
	There exists a unit-demand setting with two identical items and six agents with correlated valuations where the welfare-optimal $\MM$ must use an adaptive order (but anonymous static prices suffice).
\end{restatable}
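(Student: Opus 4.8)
The plan is to exhibit an explicit six-agent, two-item, unit-demand instance together with a small family of correlated valuation profiles (``scenarios''), each of positive probability, and then argue two things: (i) some $\MM$ that uses a single anonymous, static price together with an adaptive visiting order attains, in every scenario, the scenario-optimal social welfare --- hence is optimal not just among $\MM$s but among all mechanisms, which yields the ``anonymous static prices suffice'' half; and (ii) no $\MM$ with a static order can match this, however its (possibly adaptive and personalized) prices are chosen. Since the proposed adaptive mechanism already attains the per-scenario optimum, all the work is in (ii), and that is where I expect the difficulty.

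For (i) I would describe the adaptive-order mechanism as a short decision tree: visit a first agent at the fixed price $p$; branch on whether that agent buys; visit the agent dictated by the branch; continue, stopping once both items are sold. The scenarios are designed so that, reading off which agents have value above $p$ and which do not, this tree lands on exactly the welfare-maximizing pair of buyers in each scenario. Computing the tree's expected welfare and checking it equals the scenario-by-scenario maximum (which is also a trivial upper bound for every mechanism) is then routine.

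For (ii) the point is that, under a fixed visiting order, the mechanism's early visits necessarily act as ``probes'' whose only role is to reveal which scenario holds, and the instance is built so that this cannot be done in time. Two features drive this. First, at price $p$ there is essentially a unique agent whose visit is safe to make first: for every other agent there is a scenario in which that agent has value above $p$ yet should \emph{not} receive an item, and, because its value there is no smaller than in the scenarios where it \emph{should} be served, no choice of price for that visit both probes usefully and avoids the loss. Second, after the safe first probe the residual uncertainty splits into two ``directions,'' and the agent that resolves one direction is, on the other, exactly such a trap-with-no-escape; a static order must commit to one fixed second probe and is therefore forced into a loss in at least one positive-probability scenario, whereas the decision tree simply uses a different second probe on each branch. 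Turning this into a proof means case-analyzing all visiting orders and all pricing schemes and showing, in each case, a scenario in which the mechanism either sells an item to the wrong agent before the disambiguating purchase happens, or else visits (and thereby irrevocably ``passes'') an agent it will later need.

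The main obstacle, beyond that case analysis, will be engineering the valuation distribution so that all the required properties hold simultaneously: that the claimed adaptive mechanism is genuinely welfare-optimal scenario-by-scenario; that exactly one first probe is safe; and that the two branch-disambiguating agents really are traps-with-no-escape on the opposite branches. These constraints pull the agents' cross-scenario values in competing directions (one agent being ``low'' relative to others in one scenario and ``safe to serve'' in another, etc.), so the bulk of the verification is exhibiting a consistent assignment of values and of scenario probabilities --- which is presumably why six agents are needed rather than fewer.
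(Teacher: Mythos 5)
Your blueprint coincides with the paper's actual construction: a single ``bellwether'' agent whose decision at a common price reveals which of two correlated regimes holds, a pair of agents whose ``high-ceiling'' roles swap across the two regimes (so the order of later visits must depend on the bellwether's purchase), filler agents with a deterministic moderate value, and one anonymous static price threaded between every agent's high and low values so that the adaptive-order mechanism is ex-post (first-best) optimal in every realization. Your ``trap'' argument for part (ii) --- an agent whose value $8$-type should sometimes be served and sometimes not, with no price able to separate the two at visit time --- is exactly the paper's impossibility argument. So the approach is right.

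The genuine gap is that you never produce the witness, and for an existential claim the witness \emph{is} the proof: all of the properties you postulate (unique safe first probe, per-scenario optimality of the adaptive tree, traps-with-no-escape on the opposite branch) are claims about a distribution you have not written down, and you explicitly defer verifying that they can hold simultaneously. They can --- the paper takes $v_1\sim U\{1,15\}$; agent $2\sim U\{2,12\}$ if $v_1=15$ and $U\{3,8\}$ otherwise; agents $3,4$ with those two roles reversed; agents $5,6\equiv 4$; and the single price $3.5$ --- and checking your three bullet points against these numbers is the entire remaining content. One further caution for part (ii): your framing (``unique safe first probe, then a forced second probe'') does not obviously cover static orders that front-load the deterministic value-$4$ agents before the trap agents (e.g.\ $1,5,6,2,3,4$); these fail for a different reason (any price for agents $5,6$ either sells to value $4$ when a high-value agent remains, or burns a value-$4$ agent who is needed when all of $2$--$4$ turn out low), and a complete case analysis must include them.
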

We defer the proof to Appendix~\ref{sec:missingproofsmechanisms}. The intuition is that the agents' valuations are dependent, and knowing one particular agent's value gives important insight into the conditional distributions of the other agents' values. This ``bellweather" agent's value can be inferred from their decision to buy or not, and this additional inference is necessary for ordering the remaining agents optimally. Thus the mechanism's order must adapt to this agent's  decision.

Even when items are identical, and agents' value distributions are independent, both adaptive order and adaptive prices may be necessary.
\begin{restatable}{proposition}{dynamicorderprice}\label{dynamic_order_price}
	There exists a unit-demand setting with two identical items and four agents with independently (non-identically) distributed values where the welfare-optimal $\MM$  must use both adaptive order and adaptive prices.
\end{restatable}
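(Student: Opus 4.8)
The plan is to exhibit one explicit instance in which the optimum --- computed by backward induction over its tiny state space (the set of relevant prices being finite) --- has a rigid structure that simultaneously rules out static orders and static prices. The instance has a \emph{pivot} agent $0$ together with a three-agent \emph{gadget} $G=\{a,b,c\}$ and two identical items; the pivot is screened so that its purchase removes exactly one item with probability strictly between $0$ and $1$, routing the mechanism into one of two residual problems --- ``one item for $G$'' or ``two items for $G$'' --- and the gadget is designed so that the welfare-optimal continuation differs between these in \emph{both} the first agent visited and the prices posted. A concrete candidate: $v_0\sim\mathrm{Unif}\{1,10\}$, $v_a\sim\mathrm{Unif}\{2,6\}$, $v_b\sim\mathrm{Unif}\{1,3\}$, and $v_c\equiv 2.5$; the only prices that matter are value-support points and a prohibitive price, so the $\MM$ space is finite.

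The backbone is the identity $\mathrm{OPT}=\tfrac12\big(10+f_1(G)\big)+\tfrac12 f_2(G)$, where $f_k(G)$ denotes the optimal expected welfare of an $\MM$ allocating $k$ items among $\{a,b,c\}$: visiting the pivot first and screening it at $10$ attains the right-hand side, and I would check that visiting any gadget agent first is strictly worse, so this is $\mathrm{OPT}$. Attaining $\mathrm{OPT}$ therefore forces a mechanism to (i) visit the pivot first, (ii) play an $f_1(G)$-optimal continuation when the pivot buys, and (iii) play an $f_2(G)$-optimal continuation when it does not. The remaining ingredients are the two gadget lemmas: every $f_1(G)$-optimal $\MM$ visits $a$ first at a screening price above $2$ (so $a$ is skipped when $v_a=2$), whereas every $f_2(G)$-optimal $\MM$ visits $b$ first and, on the branch where $b$ declines, gives both items away --- in particular posts a price at most $2$, not a screening price, to $a$. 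These hold because all gadget values are interior: screening $a$ is exactly right with one item at stake but strictly over-restrictive with two, so the continuation values cannot be equalized by reusing a single order or a single price.

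Both necessity claims then follow. For adaptive order: a static-order $\MM$ not visiting the pivot first is strictly suboptimal by the comparison above, while one that does must commit to a single second agent and hence cannot visit $a$ first when the pivot buys \emph{and} $b$ first when it does not; by the gadget lemmas it misses the optimal continuation on at least one branch, so it lies strictly below $\mathrm{OPT}$. For adaptive prices: an optimal $\MM$ must screen the pivot at $10$ and then, by the gadget lemmas, post a screening price (above $2$) to $a$ when the pivot buys but a give-away price (at most $2$) to $a$ on part of the branch where the pivot does not; a personalized-static-price mechanism, forced to fix a single price for $a$, cannot do both and lies strictly below $\mathrm{OPT}$.

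The hard part is engineering the valuations so that \emph{both} gadget lemmas hold at once. In ``clean'' instances this fails: as soon as some agent has $0$ in its support, screening that agent is lossless and the offline-optimal allocation becomes $\MM$-achievable by some fixed order and/or some fixed price vector, so neither adaptivity is forced. One is pushed to interior valuations, where every screening price over- or under-allocates, and must tune the parameters so that the induced welfare loss depends on the order and on the posted price in ways that are incompatible across the two residual regimes. Verifying the gadget lemmas and the ``a gadget agent first is worse'' bound then reduces to a finite but tedious case analysis over orders and over the finitely many relevant price vectors in several sub-instances with at most three agents --- organizing this cleanly, rather than brute-forcing all $\MM$s on four agents, is the bulk of the work.
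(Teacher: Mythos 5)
Your proposal is correct, and the concrete candidate you name actually works: with $v_0\sim\mathrm{Unif}\{1,10\}$, $v_a\sim\mathrm{Unif}\{2,6\}$, $v_b\sim\mathrm{Unif}\{1,3\}$, $v_c\equiv 2.5$, the pivot-first mechanism attains the offline first-best ($10$ is always in the top two, $1$ never is, since $v_a\ge 2$ and $v_c=2.5$), the one-item gadget optimum $f_1(G)=4.375$ forces a screening offer to $a$ before anyone in $G$ is served, and the two-item gadget optimum $f_2(G)=6.875$ forces $b$ first and then a give-away price ($\le 2$) to $a$ on the branch where $b$ declines. The paper proves the same proposition with a differently engineered instance: values $\mathrm{Unif}\{1,15\}$, $\mathrm{Unif}\{3,12\}$, and two copies of $\mathrm{Unif}\{2,8\}$, where a single interior price of $5$ is used until the number of remaining items equals the number of remaining agents, and the whole argument is a backward induction over continuation values $V(i,S,x)$ with strict inequalities at each comparison. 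The key structural difference is that your instance is tuned so that the welfare-optimal $\MM$ coincides with the offline first-best; this modularizes the proof into your two gadget lemmas and turns both necessity claims into pointwise statements (a static order or a static price for $a$ fails to replicate the first-best allocation on some positive-probability realization), which is arguably cleaner to verify than the paper's full recursion. The paper's instance, by contrast, has an optimum strictly below first-best, so it genuinely needs the expected-value comparisons, but it buys a more economical price structure (one anonymous interior price, with adaptivity only in the order and in when the price drops to $0$). One point your finite case analysis must not skip, and which your inclusion of a prohibitive price among the relevant prices implicitly covers: ``visits $a$ first'' should be read as ``makes a non-trivial offer to $a$ before any gadget agent is served,'' since one must also rule out wasted prohibitive-price visits; in your instance these are strictly suboptimal because a visited agent cannot be revisited, so the lemmas hold as stated.
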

We defer the proof to Appendix~\ref{sec:missingproofsmechanisms}. The intuition is that one agent has both a higher ``ceiling" and higher ``floor" of value compared to some of the other agents. It is optimal for the mechanism to visit other agents in order to determine the optimal prices to offer this particular agent, and this information-gathering process may take either one or two rounds.
We present additional, fine-grained results regarding the need for adaptive ordering of agents for $\MM$s in in Appendix~\ref{sec:personalizedchar}.

\section{Learning Optimal $\boldsymbol{\MM}$s}
\label{sec:mech-pomdp}

In this section, we cast the problem of designing an optimal $\MM$ as a POMDP problem.
Our discussion mainly relates to welfare maximization, but we will also comment on how our results extend to revenue maximization and max-min fairness.

We define the POMDP as follows:
\begin{itemize}
	\item A state $s^t = (\vals,\alloc^{t-1},\rho^{t-1})$ is a tuple consisting of the agent valuations $\vals$, the current partial allocation $\alloc^{t-1}$ and the residual setting $\rho^{t-1}$ consisting of agents not yet visited and items not yet allocated.
	\item An action $a^t = (i^t, p^t)$ defines the next selected agent $i^t$ and the posted prices $p^t$. \item For the state transition, the selected agent  chooses an item or bundle of items $x^t$, leading to a new state $s^{t+1}$, where the bundle $x^t$ is added to  partial allocation $\alloc^{t-1}$ to form a new partial allocation $\alloc^{t}$, and the items and agent are removed from the residual setting $\rho^{t-1}$ to form $\rho^{t}$.
	\item The observation $o^{t+1} = x^t$ consists of the item or set of items $x^t$ chosen by the agent selected at round $t$.
	\item We only provide rewards in terminal states, when the mechanism outcome $\alloc$, $\p$ is available. These terminal rewards are given $\mathsf{g}(\alloc,\p;\vals)$; that is, the objective function we want to maximize.\footnote{We note that, depending on the objective at hand, one can design intermediate reward schemes (e.g., under welfare maximization, value of agent $i^t$ for bundle $x^t$) that may improve learning performance. We choose to only provide final rewards in order to support objectives that can be calculated only given the final outcome, such as max-min fairness.}
\end{itemize}

Next, we study the information that suffices to determine an optimal action after any history of observations. We show the analysis is essentially tight for the case of unit-demand valuations and the social welfare objective. We defer the proofs to Appendix~\ref{sec:missing-proofs-mech-pomdp}.
\begin{restatable}{proposition}{agentsitemssuff}\label{prop:agents_items_suff}
	For agents with independently (non-identically) distributed valuations, with the objective of maximizing welfare or revenue, maintaining remaining agents and items suffices to determine an optimal policy.
\end{restatable}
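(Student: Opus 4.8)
The plan is to show that the Bellman optimality equation of the POMDP ``closes'' on the reduced state space whose only coordinate is the residual setting $\rho^{t-1}=(\rho^{t-1}_\text{agents},\rho^{t-1}_\text{items})$; once this is done, standard backward induction produces an optimal policy that reads off only $\rho^{t-1}$. Two facts do the work. First, separability: for welfare, $\mathsf{g}(\alloc,\p;\vals)=\sum_{i}v_i(x_i)$, so the reward ``already accrued'' by the agents visited in rounds $1,\dots,t-1$ is a history-dependent constant that cannot affect which action is optimal going forward, and the only feature of the partial allocation $\alloc^{t-1}$ relevant to the future is which items are still available, i.e.\ $\rho^{t-1}_\text{items}$. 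Second, and this is the crux, independence across agents implies the history of observations is uninformative about the valuations of the agents who have not yet been visited.

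To make the second fact precise I would argue by induction on the round that, for every realizable history $h^{t-1}$ (the sequence of chosen agents, posted menus, and selected bundles through round $t-1$) whose set of visited agents is $[n]\setminus S$, the conditional law of $(v_i)_{i\in S}$ given $h^{t-1}$ equals the prior product of marginals $\prod_{i\in S}\D_i$, independent of $h^{t-1}$. The base case is immediate since the agent queried in round $1$, and the menu it is offered, depend only on the policy's internal randomness, which is independent of $\vals$. For the inductive step, note that in an $\MM$ the agent $i^s$ queried in round $s$ is a function of $h^{s-1}$ and the policy's randomness, and its dominant-strategy response $x^s$ is a function of $h^{s-1}$, the policy's randomness, and $v_{i^s}$ only --- no agent's action ever depends on another agent's valuation. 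Hence the whole trace through round $t-1$ is a measurable function of the policy's randomness together with $(v_i)_{i\notin S}$, and is therefore independent of $(v_i)_{i\in S}$; conditioning on it leaves the law of $(v_i)_{i\in S}$ at the prior.

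Combining the two facts, write the expected welfare conditioned on $h^{t-1}$ as $\sum_{i\notin S}v_i(x_i)+\E\!\big[\sum_{i\in S}v_i(x_i)\,\big|\,h^{t-1}\big]$ with $S=\rho^{t-1}_\text{agents}$. The first sum is fixed by $h^{t-1}$; by the previous paragraph the conditional expectation depends on $h^{t-1}$ only through the pair $(S,\rho^{t-1}_\text{items})=\rho^{t-1}$, since the distribution of the remaining agents' valuations and the set of items that can still be allocated to them are exactly what $\rho^{t-1}$ records. Thus the optimal continuation value differs from a well-defined function $V^{\ast}(\rho^{t-1})$ only by the $h^{t-1}$-constant first sum, so the $\argmax$ over the current action of the $Q$-value agrees whether we condition on $h^{t-1}$ or on $\rho^{t-1}$, and an optimal policy that is a function of $\rho^{t-1}$ alone exists; the transition kernel on $\rho$ (the law of next round's residual setting given the chosen agent and prices) is likewise governed only by that agent's prior marginal, so the induction is self-contained. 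The identical argument with $\tau_i$ in place of $v_i(x_i)$ handles revenue.

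The step I expect to be the main obstacle is the measure-theoretic care needed for the independence claim in the presence of an \emph{adaptive} order: the set $S$ of not-yet-visited agents is itself random, so one must argue, as above, at the level of the $\sigma$-algebra generated by the history rather than with a fixed index set, to be sure that the event $h^{t-1}$ does not covertly encode information about $(v_i)_{i\in S}$. Finally, it is worth flagging why the statement is confined to welfare and revenue: for max-min fairness the value accrued so far is not an additive constant, so it cannot be discarded, and the reduction to $\rho^{t-1}$ genuinely fails there.
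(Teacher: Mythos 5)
Your proof is correct and follows essentially the same route as the paper's: the paper's (two-sentence) argument rests on exactly your two facts, namely additive separability of the welfare/revenue objective and the observation that, under independent valuations, remaining agents' purchase behavior is independent of the history. Your write-up merely makes rigorous what the paper asserts, in particular the conditioning argument needed when the visitation order is adaptive.
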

Interestingly, the statement in Proposition~\ref{prop:agents_items_suff} is no longer true when dealing with a more allocation-sensitive objective such as max-min fairness.\footnote{Consider an instance where some agents have already arrived and been allocated, and the policy can either choose action $a$ or $b$. Action $a$ leads to a max-min value of yet to arrive agents of $5$ with probability $1/2$, and $1$ with probability $1/2$. Action $b$ leads to a max-min value of yet to arrive agents of $10$ with probability $1/2$, and $0$ with probability $1/2$. If the max-min value of the partial allocation is $2$, then the optimal action to take is action $a$. However, if the max-min value of the partial allocation is $10$, then the optimal action is $b$. In particular, inference about the values of agents already allocated is necessary to support optimal actions, and the simple remaining agents/items statistic is not sufficient.} The next theorem reasons about history information for all distributions and objectives.
\begin{restatable}{theorem}{thmone}\label{thm:1}
	With correlated valuations, the allocation matrix along with the agents who have not yet received an offer is sufficient to determine an optimal policy, whatever the design objective.
	Moreover, there exists a unit-demand setting with correlated valuations where optimal policies must use information of size $\Omega\left(\min\{n,m\}\log\left(\max\{n,m\}\right)\right)$.
\end{restatable}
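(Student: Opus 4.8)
The plan is to prove the two halves separately.

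\textbf{Sufficiency.} I would first observe that it suffices to exhibit one optimal policy that is a function of the statistic $\sigma^t := (\alloc^{t-1},\rho^{t-1}_{\text{agents}})$ — the allocation matrix together with the set of agents not yet offered anything — noting that $\rho^{t-1}_{\text{items}}$ is already determined by $\alloc^{t-1}$ (unallocated $=$ available). Since the POMDP has finite horizon ($n$ rounds) and agents respond deterministically once a tie‑breaking convention is fixed, it is standard that some deterministic history‑dependent policy $\pi^*$ is optimal. The key step is a forward‑simulation/reconstruction argument: for \emph{any} deterministic policy $\pi$, the whole realized history is recoverable from $\sigma^t$. Indeed, the first agent $i^1$ and prices $p^1$ are fixed by $\pi$ on the empty history; agent $i^1$ is visited exactly once, so its round‑$1$ bundle is the row $x_{i^1}$ read off $\alloc^{t-1}$; this yields the history after round $1$, hence $\pi$'s next action, and iterating (stopping once the reconstructed visited set equals $[n]\setminus\rho^{t-1}_{\text{agents}}$) recovers $(i^1,p^1,x^1),\dots,(i^{t-1},p^{t-1},x^{t-1})$. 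Consequently, among histories reachable under $\pi^*$ the statistic $\sigma$ pins down the history, so $\pi^*$ factors through $\sigma$; extending it arbitrarily off the reachable set gives an optimal $\sigma$‑measurable policy. Nothing here used $\mathsf g$, so it holds for every objective, and a crude size estimate on $\sigma$ (a partial matching/partition of $[m]$ over $[n]$, plus an $n$‑bit set) gives the matching upper bound.

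\textbf{Lower bound.} I would build a correlated unit‑demand instance in which the allocation matrix is \emph{forced} to carry $\Omega(k\log N)$ bits, with $k=\min\{n,m\}$, $N=\max\{n,m\}$. Taking (say) $n\le m$, use $\sim k/2$ ``probe'' agents and $\sim k/2$ ``beneficiary'' agents, and let Nature draw a uniformly random injection $\nu$ from the probes into $\sim m$ distinct items; a Stirling‑type estimate gives entropy $\Theta(k\log m)=\Theta(k\log N)$. Probe $t$'s valuation puts positive value only on item $\nu(t)$, so when offered the available items at price $0$ it takes $\nu(t)$, which is therefore revealed in the allocation matrix — and, by the first part, nowhere else. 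Beneficiary $n/2+t$'s valuation is correlated with $\nu$ so that its unique welfare‑improving item is determined by $\nu$, with the list of beneficiaries' optimal items in bijection with $\nu$; hence any policy whose internal state collapses two realizations $\nu\ne\nu'$ makes a welfare‑losing offer to some beneficiary on at least one of them. Since there are $N^{\Omega(k)}$ realizations, an optimal policy must maintain $\ge\log\!\big(N^{\Omega(k)}\big)=\Omega(k\log N)$ bits, so the statistic of the first part is essentially tight. The regime $n>m$ is handled analogously, the entropy then residing in which $m$ agents, and in which pattern, receive items.

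\textbf{Main obstacle.} The hard part will be ensuring the lower bound cannot be circumvented by a clever \emph{adaptive order} — e.g.\ interleaving a probe with ``its'' beneficiary so as to learn one coordinate of $\nu$ and immediately spend it, never holding all $\Omega(k\log N)$ bits at once. Ruling this out requires genuine allocative scarcity: the beneficiaries' items must compete (directly or through the residual supply) so that no beneficiary can be served optimally without \emph{global} knowledge of $\nu$, and no reordering lets information be ``used and discarded'' incrementally. Engineering this interaction precisely — while keeping valuations unit‑demand, the correlation structure clean, and the welfare‑optimal outcome unique enough to make the counting argument go through across all regimes of $n$ versus $m$ — is where essentially all the effort goes. (By contrast, the only subtlety on the sufficiency side — that, a priori, two histories with the same $\sigma$ induce different posteriors over $\vals$ — disappears the moment one commits to a $\sigma$‑measurable policy, via the reconstruction above.)
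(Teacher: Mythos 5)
Your sufficiency argument is essentially the paper's: fix a deterministic optimal history-dependent policy (which exists by the Markov property), reconstruct the full observable history from the allocation matrix plus the set of unvisited agents by forward simulation round by round, and conclude that the optimal policy factors through that statistic on its reachable histories. This half is correct and objective-independent exactly as you say.

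The gap is in the lower bound, and it is precisely the obstacle you flag at the end without resolving. In your probe/beneficiary construction, each beneficiary's optimal offer depends on a single coordinate of the random injection $\nu$, so an adaptive-order policy can interleave --- visit probe $t$, learn $\nu(t)$, immediately serve beneficiary $t$, discard $\nu(t)$ --- and never hold more than $O(\log N)$ bits of $\nu$ at once. Your counting argument then bounds only the total information revealed over the run, not the size of the policy's state at any single round, which is what the theorem requires. You acknowledge that ``engineering this interaction is where essentially all the effort goes,'' but that engineering is the proof, and it is missing. The paper closes the hole with a structurally different device: $n$ agents each value only the item $j_\ell$ matched to them (so the whole random matching is revealed through their purchases), and then a \emph{single} terminal decision --- the personalized price offered to the first of two special agents $a,b$ for one special item $x$, whose value distribution is a copy of the example from Proposition~\ref{personalized_price} scaled by a quantity that is injective in the realized matching --- requires knowledge of the \emph{entire} matching at once. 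Since there are more than $(m/2)^n$ matchings when $m>2n$, each demanding a distinct optimal price at that one moment, the state just before visiting $a$ must carry $\Omega(n\log m)$ bits, and no interleaving is possible because the information is never ``spent'' until the last step. To salvage your version you would need an analogous concentration device --- routing all the payoff from knowing $\nu$ through one scarce decision made only after all probes --- which is exactly what the special item $x$ accomplishes.
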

For sufficiency, the allocation matrix and remaining agents always suffices to recover the entire history of observations of any (deterministic) policy. The result follows, since there always exists  deterministic, optimal policies for POMDPs given the entire history of observations (this follows by the Markov property~\citep{bellman1957markovian}).
Theorem~\ref{thm:1} also establishes that carrying the current allocation and remaining agents is necessary from a space complexity viewpoint, since the current allocation and remaining agents can be encoded in  $O\left(\min\{n,m\}\log\left(\max\{n,m\}\right)\right)$ space. 
Another direct corollary is that knowledge of the remaining agents and  items (linear space), and not decisions of previous agents, is not in general enough information to support optimal policies. The problem that arises with correlated valuations comes from the need for inference about the valuations of remaining agents.

As the next proposition shows, policies that can only access remaining agents correspond to a special case of $\MM$s.

\begin{restatable}{proposition}{personalizedpricesstate}\label{prop:personalized-prices-state}
	The subclass of SPMs with static, possibly personalized prices, and a static order, corresponds to policies that only have access to the set of remaining agents.
\end{restatable}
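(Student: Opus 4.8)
The plan is to prove this equivalence by establishing both directions of the correspondence between the class of $\MM$s with static (personalized) prices and static order, and the class of POMDP policies whose only input is the set of remaining agents. Throughout I will use the POMDP formalization from Section~\ref{sec:mech-pomdp}, where a policy maps a statistic of the history to an action $a^t = (i^t, p^t)$.

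First I would argue the easy direction: any $\MM$ with a static order $\sigma$ and static (possibly personalized) price vector $(p_i)_{i\in[n]}$ induces a policy that depends only on the set of remaining agents $\rho^{t-1}_{\text{agents}}$. Indeed, given the remaining-agents set, the static order determines exactly which agent $i^t$ is visited next (the $\sigma$-earliest agent not yet visited), and the static personalized prices determine the prices to post for that agent; neither depends on prior purchase observations or on the set of remaining items. One subtlety to handle cleanly: the prices are posted only for \emph{available} items, so strictly speaking the action depends on $\rho^{t-1}_{\text{items}}$ as well. I would resolve this by noting that the underlying price function $p_i:[m]\to\mathbb{R}_{\ge0}$ is fixed in advance for each agent $i$, and the restriction to available items is a mechanical consequence of the mechanism's rules rather than an adaptive choice; equivalently, one can let the policy output the full price vector and have the environment ignore prices on unavailable items. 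A second subtlety is randomization: ASP/PSP and SO mechanisms allow the prices and order to be drawn once at the start "in a potentially random way," so I would phrase the policy as first sampling $(\sigma, (p_i)_i)$ and then behaving deterministically given that draw — this is still a valid POMDP policy (mixtures of policies are policies), and its action at round $t$ is a function of $\rho^{t-1}_{\text{agents}}$ together with the initial random seed, which carries no information about observed purchases.

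Next I would argue the converse: any policy $\pi$ whose action $a^t$ is a function only of the remaining-agents set $R = \rho^{t-1}_{\text{agents}}$ (and possibly a once-sampled seed) \emph{is} an $\MM$ with static order and static personalized prices. The key observation is that the sequence of remaining-agents sets encountered along any run is completely determined and does not depend on observations: it is $[n], [n]\setminus\{i^1\}, [n]\setminus\{i^1,i^2\},\ldots$, and since $i^t = \pi_{\text{agent}}(R^{t-1})$ is a deterministic function of $R^{t-1}$ (given the seed), the entire visiting order $i^1, i^2, \ldots, i^n$ is pinned down at the outset, independent of what agents buy — hence the order is static. Likewise, the price posted to agent $i^t$ is $\pi_{\text{price}}(R^{t-1})$, and because $R^{t-1}$ is itself determined in advance, the price associated with each agent is fixed before the mechanism runs; different agents may get different prices, so these are static personalized prices. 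Conditioning on the seed makes everything deterministic and the statement then follows by averaging over the seed.

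The main obstacle — really the only place that needs care rather than bookkeeping — is pinning down exactly what "has access only to the set of remaining agents" should mean in the presence of the remaining-items component and of randomization, so that the correspondence is a genuine bijection (up to irrelevant behavior on unreachable inputs) rather than just a containment in one direction. I would handle this by stating the convention explicitly at the start of the proof: a policy "has access only to the set of remaining agents" if, fixing any realization of the policy's internal randomness, its chosen agent $i^t$ and its chosen price for each item are functions of $\rho^{t-1}_{\text{agents}}$ alone. Once that convention is fixed, both directions reduce to the single structural fact that the remaining-agents set evolves deterministically and observation-independently under such a policy, so that "adaptive in the remaining-agents statistic" collapses to "static." I would also remark, to square this with Proposition~\ref{prop:agents_items_suff} and Theorem~\ref{thm:1}, that when valuations are independent the remaining-items information is redundant given remaining agents (for welfare/revenue), so this subclass is not artificially weakened in those settings — it is genuinely the "no inference from purchases" subclass, which is the point of the proposition.
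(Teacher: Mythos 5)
Your proposal is correct and rests on the same core observation as the paper's (much terser) proof: a policy that sees only the remaining-agents set uses no information beyond what is known before the first action, since that set evolves deterministically and observation-independently, so order and prices collapse to static while still permitting personalization via the agent's position in the order. Your treatment of the two directions, the remaining-items subtlety, and the initial randomization is more explicit than the paper's, but it is an elaboration of the same argument rather than a different route.
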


\paragraph{Linear Policies are Insufficient.}

Given access to the allocation matrix and  remaining agents, it is also interesting to understand the class of policies that are necessary to support the welfare-optimal mechanisms. Given input parameters $x$, linear policies map the input to the $\ell$th output using a linear transformation $x\cdot\theta_\ell^\mathsf{T}$, where  $\theta = \{\theta_\ell\}_\ell$ are parameters of the policy. For the purpose of our learning framework, $x$ is a flattened binary  allocation matrix and a binary vector of the remaining agents. We output $n+m$ output variables representing the scores of  agents (implying an order), and the prices of  items.
We are able to show that linear policies are insufficient.

\begin{restatable}{proposition}{nonlinearpolicy}\label{prop:non-linear-policy}
	There exists a setting where the welfare-optimal $\MM$  cannot be implemented via a  policy that is linear in the allocation matrix and  remaining agents.
\end{restatable}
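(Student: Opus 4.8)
The plan is to exhibit a concrete small instance whose welfare-optimal $\MM$ requires a policy that is not linear in the flattened allocation matrix together with the binary vector of remaining agents. The natural candidate is to reuse (or lightly adapt) the four-agent, two-identical-item construction from Proposition~\ref{dynamic_order_price}, since there the optimal mechanism's choice of which agent to visit next depends non-monotonically on what it has learned: after one round the optimal continuation can differ qualitatively depending on whether the first visited agent bought or not, and in some branches an additional information-gathering round is needed. The key point to extract is that the optimal \emph{order} (equivalently, the argmax over agent-scores) is a function of the observed purchase pattern that cannot be realized as the argmax of linear score functions of the input.

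The steps, in order, would be: (i) fix the instance and compute (or quote from the appendix proof of Proposition~\ref{dynamic_order_price}) the welfare-optimal $\MM$, writing it explicitly as a decision tree over rounds, recording for each reachable information state the identity of the agent to visit and the prices to post; (ii) encode each reachable information state as the pair $(\text{allocation matrix}, \text{remaining agents})$ that a policy would actually see, and observe that distinct information states giving rise to different optimal next-agents are not linearly separable in this encoding --- concretely, one finds three information states $x^{(1)}, x^{(2)}, x^{(3)}$ (e.g.\ ``nobody has bought yet,'' ``agent $j$ bought item $1$,'' ``agent $j$ declined and agent $k$ bought'') such that the optimal next agent at $x^{(2)}$ differs from the optimal next agent at both $x^{(1)}$ and $x^{(3)}$, while $x^{(2)}$ lies in the convex hull (or on the segment between) $x^{(1)}$ and $x^{(3)}$ in the binary feature space; (iii) conclude that no single linear score vector $\theta$ can make the right agent the argmax at all three states, since argmax of an affine function is quasiconvex-compatible and cannot ``switch away and back'' along a segment, hence no linear policy implements the optimal mechanism; (iv) finally note that this is not merely a tie-breaking artifact --- the welfare gap between the optimal action and any other action at the offending state is strictly positive, so a linear policy is strictly suboptimal, not just non-unique.

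For the argument in step (iii) to be clean, I would isolate it as a small lemma: if a linear policy selects its next agent as $\arg\max_\ell x\cdot\theta_\ell^{\mathsf T}$, then for any three inputs with $x^{(2)} = \lambda x^{(1)} + (1-\lambda) x^{(3)}$, the selected agent at $x^{(2)}$ must equal the selected agent at $x^{(1)}$ or at $x^{(3)}$ (because $x^{(2)}\cdot(\theta_a - \theta_b)^{\mathsf T}$ is a convex combination of the corresponding values at $x^{(1)}$ and $x^{(3)}$, so if it is positive then at least one of those is positive). Combined with the prices being outputs of the same linear map, this rules out the required behavior. The main obstacle is engineering the instance so that the three relevant information states actually sit in affine position (one between the other two) in the binary allocation-plus-remaining-agents coordinates --- purchase events add $1$s to the allocation matrix and remove agents, so one has to choose the branching carefully so that the ``middle'' information state is coordinate-wise between the other two; I expect that the decline-then-buy branch can be made to serve as this midpoint, but verifying it requires writing out the exact feature vectors, which is the one place the construction could need tweaking. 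Everything else --- the welfare computation for the fixed instance and the separation lemma --- is routine.
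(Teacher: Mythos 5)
There are two genuine gaps here, and together they mean the proposal as written would not go through. First, the separation lemma you isolate in step (iii) is the wrong tool for this feature space. With more than two candidate agents the lemma is false: the region where agent $\ell$ attains the argmax of the linear scores is an intersection of halfspaces, hence convex, so a segment can pass through the cells of three \emph{different} agents in the order $b,a,c$ --- agent $a$ then wins at the midpoint without winning at either endpoint. More fatally, even in the two-candidate case where the lemma does hold, it can never be applied: the policy's input is a flattened \emph{binary} allocation matrix concatenated with a \emph{binary} remaining-agents vector, and three distinct collinear binary vectors do not exist (the middle point of a segment between two distinct $0/1$ vectors has a fractional coordinate wherever the endpoints disagree). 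So the obstacle you flag at the end --- engineering three information states in affine position --- is not a detail to be tweaked; it is insurmountable. The linear-inseparability argument that does work is the four-point (XOR/parallelogram) one: find reachable states $x^{(1)},x^{(2)},x^{(3)},x^{(4)}$ with $x^{(1)}+x^{(4)}=x^{(2)}+x^{(3)}$ at which the optimal next agent is $C,D,D,C$ respectively; then the affine function $f(x)=x\cdot(\theta_C-\theta_D)^{\mathsf T}$ satisfies $f(x^{(1)})+f(x^{(4)})=f(x^{(2)})+f(x^{(3)})$ with the left side forced positive and the right side forced negative. This is exactly the paper's argument, run on the four allocation profiles ``both of two designated early agents allocated / only the first / only the second / neither.''

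Second, the instance you propose to reuse does not supply such a configuration. Needing adaptive order and adaptive prices (Proposition~\ref{dynamic_order_price}) only says the optimal action is a non-constant function of the allocation matrix; it does not make that function non-linear. In that four-agent instance the order rule is essentially ``visit agent 2 next iff agent 1 was allocated, else agent 3,'' which is realized by linear scores such as $s_2=[\text{agent 1 allocated}]$ and $s_3=1/2$, and the price drop from $5$ to $0$ tracks the (linear) count of allocated items relative to remaining agents closely enough to be reproduced by an affine price output; so this instance plausibly \emph{is} linearly implementable. The paper instead engineers a dedicated instance in which the identity of the welfare-optimal next agent is the XOR of two allocation bits (two early agents' purchase indicators): the same later agent must be preferred when both bought and when neither bought, and the other later agent when exactly one bought. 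That XOR structure is the essential missing idea; without it no contradiction with linearity arises.
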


This provides support for non-linear methods for the $\MM$ design problem, motivating the use of neural networks.

\section{Experimental Results}
In this section, we test the ability of standard RL algorithms to learn optimal $\MM$s  across a wide range of settings.

\paragraph{RL Algorithm.}

Motivated by its good performance across different domains, we report our results for the \textit{proximal policy optimization} (PPO) algorithm~\citep{schulman2017proximal}, a policy gradient algorithm where the learning objective is modified to prevent large gradient steps, and as implemented in OpenAI Stable Baselines.\footnote{We use the OpenAI Stable Baselines version v2.10.0 (https://github.com/hill-a/stable-baselines).} Similarly to \citet{wu2017scalable, mnih2016asynchronous}, we run each experiment using 6 seeds and use the 3 seeds with highest average performance to plot the learning curves in figures
~\ref{fig:exp_simpleCorrelated}~-~\ref{fig:more-settings}.  At periodic intervals during training, we evaluate the objective of the current policy using a fresh set of samples. It is these evaluation curves that are shown in our experiment figures. “Performance” means average objective value of the three selected seeds--objective value is welfare, revenue, or max-min fairness, depending on the setting.
The shaded regions show 95\% confidence intervals based on the average performances of the 3 selected seeds. This is done to plot the benchmarks as well.

We encode the policy via a standard 2-layer \textit{multilayer perceptron} (MLP) \citep{bourlard1989speech} network. The policy takes as input a statistic of the history of observations (different statistics used are described below), and outputs $n + m$ output variables, used to determine the considered agent and the  prices in a given round.
The first $n$ outputs give agents' weights,
and agent $i^t$ is selected as the highest-weight agent among the remaining agents using a $\argmax$ over the weights. The other $m$ weights give the prices agent $i^t$ is faced.
The state transition function models agents that follow their dominant strategy, and pick a
utility-maximizing bundle given offered prices.

At the end of an episode, we calculate the reward. For social welfare, this reflects the allocation and agent valuations; other objectives can be captured, e.g., for revenue the reward is the total payment collected, and for max-min fairness, the reward is the minimum value across agents.
We also employ variance-reduction techniques, as is common in the RL literature~\cite[e.g.]{greensmith2004variance}.\footnote{For welfare and revenue, we  subtract the optimal welfare from the achieved welfare at each episode. As the optimal welfare does not depend on the policy, a policy maximizing this modified reward also maximizes the original objective.}

In order to study trade-offs between simplicity and robustness of learned policies, we vary the statistic of the history of observations that we make available to the policy:
\begin{enumerate}
	\item \textit{Items/agents left}, encoding which items are still available and which agents are still to be considered. As discussed above, this statistic supports optimal policies when agents have independently distributed valuations for welfare and revenue maximization.
	\item \textit{Allocation matrix} that, in addition to items/agents left, encodes the temporary allocation $\alloc^t$ at each round $t$. As discussed above, this statistic supports optimal policies even when agents' valuations are correlated and for all objectives.
	\item \textit{Price-allocation matrix}, which, in addition to items/agents left and temporary allocation, stores an $n\times m$ real-valued matrix with the prices the agents have faced so far. This is a sufficient statistic for our POMDPs as it captures the entire history of observations.
\end{enumerate}

\paragraph{Baselines.}
We consider the following three  baselines:
\begin{enumerate}
	\item \textit{Random serial dictatorship}, where the agents' order is determined randomly,  and prices are set to zero.
	\item \textit{Anonymous static prices}, where we constrain  policies to those that correspond to  ASP mechanisms (this is achieved by hiding all history from the policy, which forces the order and prices not to depend on past observation or the identity of the next agent).
	\item \textit{Personalized static prices}, where we constrain  policies to  the family of PSP mechanisms (this is achieved by only providing the policy with information about the remaining agents; see Proposition~\ref{prop:personalized-prices-state}).
\end{enumerate}

\begin{figure*}[h!]
\centering
\includegraphics[width=.95\linewidth]{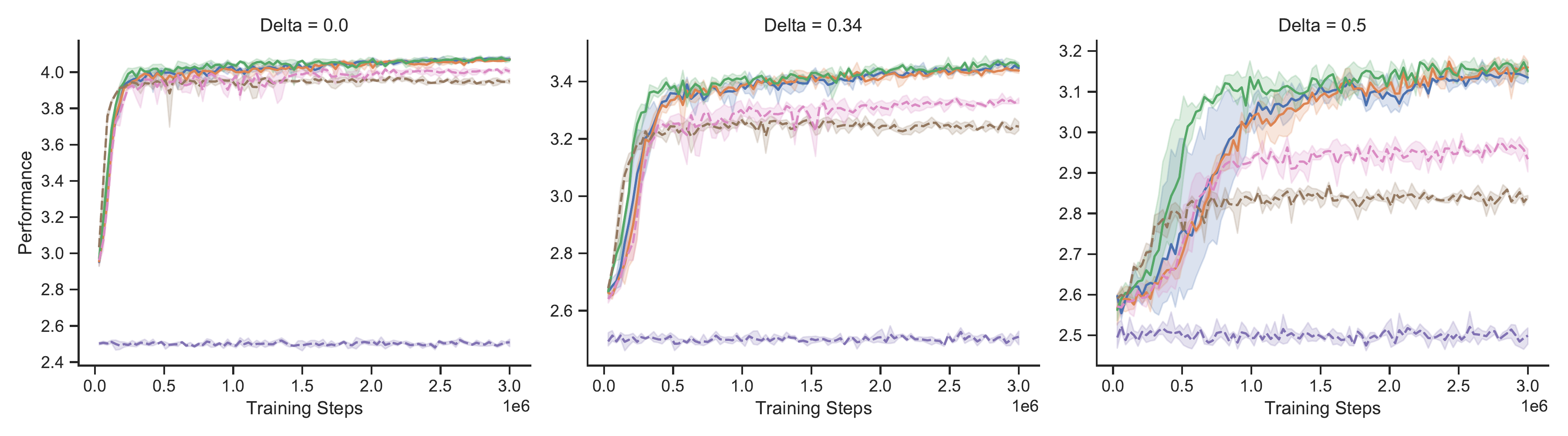}
\caption{Corr.~Value, welfare objective. 20 agents, 5 identical items, varying corr.~parameter, $\delta$.  See Figure~\ref{fig:basic-experiments} for legend. \label{fig:exp_simpleCorrelated}}
\includegraphics[width=.95\linewidth]{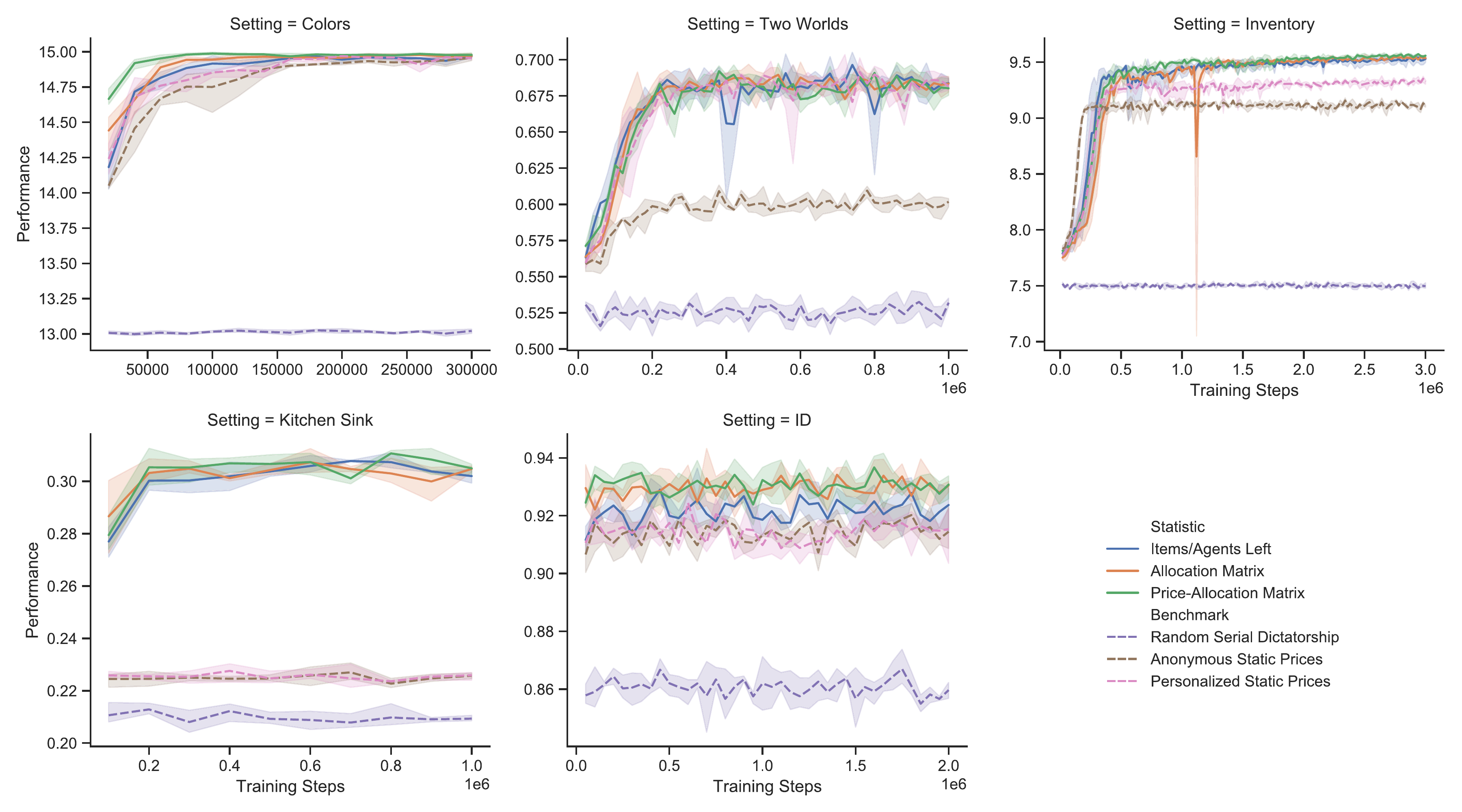}
\caption{Theory-driven, welfare objective. (a) Colors. (b) Two Worlds. (c) Adaptive pricing. (d) Adaptive order and pricing. (e) Allocation information. \label{fig:basic-experiments}}
\includegraphics[width=.95\linewidth]{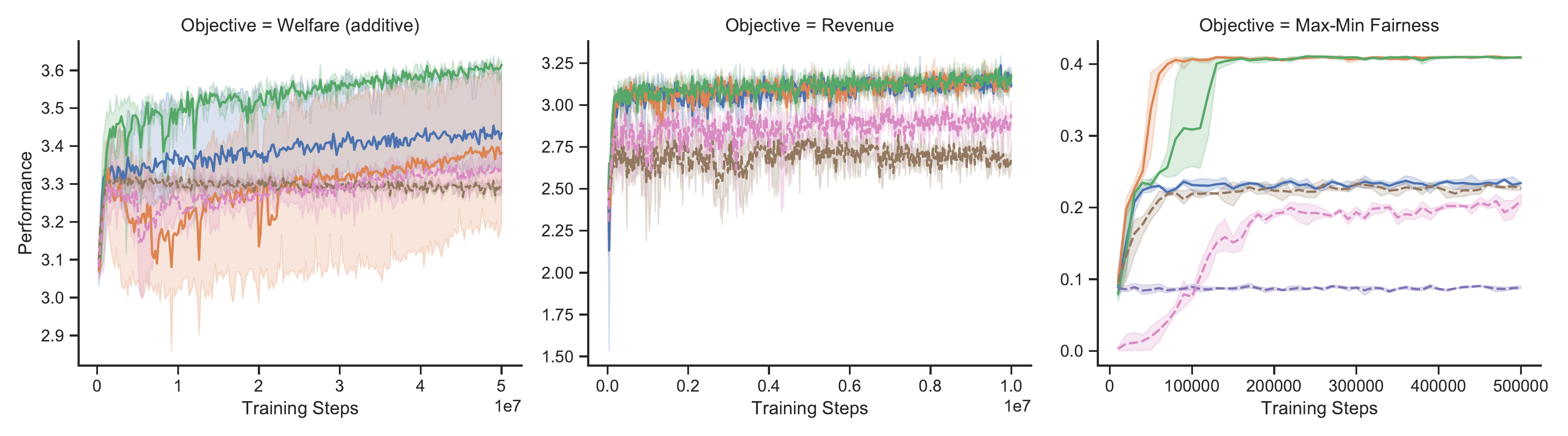}
\caption{Beyond UD and WM. (a) Additive across types, welfare objective (corr.~setting, 10 agents, 2 \& 4 identical items, $\delta = 0.5$). (b) Revenue objective (corr.~setting, 20 agents, 5 items, $\delta = 0.5$). (c) Max-min fairness. See Figure~\ref{fig:basic-experiments} for legend. \label{fig:more-settings}}
\end{figure*}

\paragraph{Part 1: Correlated Value Experiments (Welfare).}
Recognizing the role of correlations in the power that comes from the adaptivity of $\MM$s, we first test a setting with multiple identical copies of an item, and agents with unit-demand and correlated values. For this, we use parameter $0 \leq \delta \leq 1$ to control the amount of correlation.  We sample  $z\sim U[\frac{1-\delta}{2}, \frac{1+\delta}{2}]$, and draw $v_i$ independently from $\mathrm{unif}(z-\frac{1-\delta}{2}, z+\frac{1-\delta}{2})$. For $\delta = 0$ this gives i.i.d. $v_i$ all drawn uniformly between $0$ and $1$. For $\delta = 1$ this gives all identical $v_i = z$. For intermediary values of $\delta$ we get increasing correlation between the $v_i$'s.

The results are reported in Figure~\ref{fig:exp_simpleCorrelated}. We vary the number of agents, items, and  $\delta$, controlling the level of correlation. We show results for 20 agents and 5 identical items, and $\delta = 0, 0.25, 0.33$, and $0.5$. The POMDP with the price-allocation matrix statistic is able to substantially outperform the best static mechanism as well as RSD. A dynamic approach using an allocation matrix or agents and items left also outperforms a static mechanism, but learns more slowly than an RL policy that is provided with a price history, especially for larger $\delta$. Results for other combinations of agents and items (up to 30 each were tested) yield similar results.\footnote{Experiments with a small number of items, or close to as many items as agents, yield less-interesting results, as these problems are much easier and all approaches achieved near-optimal welfare.}

\paragraph{Part 2: Theory-driven Experiments (Welfare).}

Second, we look to support the theoretical results described above. We consider five different settings, each with unit-demand agents. We defer the full description of the settings to Appendix~\ref{sec:theory-driven-experiments}.
In each of the settings, the optimal $\MM$ mechanism has different features:
\begin{itemize}
	\item  \textit{Colors}: the optimal $\MM$   is an anonymous static pricing mechanism.
	\item \textit{Two worlds}: the optimal $\MM$  is a static mechanism but requires personalized prices.
	\item \textit{Inventory}:  the optimal  $\MM$ makes use of adaptive prices, and this  outperforms the best  static personalized price mechanism, which outperforms  the best static and anonymous price mechanism.
	\item \textit{Kitchen sink}:  both types of adaptiveness are needed by the optimal $\MM$.
	\item \textit{ID}: the statistic of remaining agents and items is not sufficient to support the optimal policy.
\end{itemize}
Figure~\ref{fig:basic-experiments} shows the results for the different setups.
Our experiments show that (a) we are able to learn the optimal $\MM$ mechanism for each of the setups using deep RL algorithms; and (b) we are able to show exactly the  variation in performance suggested by theory, and depending on the type of statistics used as input for the policy:
\begin {itemize}
\item In Figure~\ref{fig:basic-experiments} (a) (Colors) we  get optimal performance already when learning a static anonymous price policy.
\item In Figure~\ref{fig:basic-experiments} (b) (Two worlds) a static personalized price policy performs optimally, but not a static anonymous price policy.
\item Figure~\ref{fig:basic-experiments} (c) (Inventory)  adaptive policies are able to achieve optimal performance, outperforming personalized price mechanisms, which in turn outperform anonymous price mechanisms. \item Figure~\ref{fig:basic-experiments} (d) (Kitchen sink) adaptive policies are able to learn an optimal policy that requires using both adaptive order and adaptive prices.
\item Finally, Figure~~\ref{fig:basic-experiments} (e) (ID)  some setups require more complex information, as policies that leverage allocation information outperform the policy that just access remaining agents and items.
\end{itemize}

\paragraph{Part 3: Beyond Unit Demand, and Beyond Welfare Maximization.}
Third, we present  results for more general setups (we defer their full description to Appendix~\ref{sec:other-settings}):
\begin{itemize}
\item   \textit{Additive-across-types under welfare objective}: there are two item types, and agents have additive valuations on one unit of each type.
\item  \textit{Revenue maximization}: we work in the correlated setting from part one, with $\delta=0.5$, but for a revenue objective. %
\item  \textit{Max-min fairness}: the goal is to maximize the minimum value achieved by an agent in an allocation, and we consider a setting where an adaptive order is required for an optimal reward.
\end{itemize}

See Figure~\ref{fig:more-settings}.
These results show the full generality of the framework, and show the promise in using deep-RL methods for learning  $\MM$s for varying settings. Interestingly, they also show different sensitivities for the statistics used than in the unit-demand, welfare-maximization setting. For the additive-across-types setting, price information has a still greater effect on the learning rate. For the max-min fairness setting,  providing the entire allocation information has a  large effect on the learning process, as the objective  is very sensitive to specific parts of the allocation; this is also consistent with the fact that agents and items left do not provide sufficient information for this objective (see the discussion following Proposition~\ref{prop:agents_items_suff}).

\section{Conclusion}

We have studied the class of $\MM$s, providing characterization results and formulating the optimal design problem as a POMDP problem. Beyond studying the history statistics to support optimal policies,
we have also demonstrated the practical learnability of the class of $\MM$s  in  increasingly complex settings.
This work points toward many interesting open questions for future work. First, it will be interesting to adopt policies with a fixed-size memory, for instance through  LSTM methods~\citep{hochreiter1997long}, allowing the approach to potentially scale-up to very large numbers of agents and items (dispensing with large, sufficient statistics). Second, it will be interesting and challenging to study settings where there is no simple, dominant-strategy equilibrium. This will require  methods to also model agent behavior~\citep{PhelpsMPS02,Byde03,Wellman06,PhelpsMP10,ThompsonL13,BunzLS18,ViqueiraCMG19,zheng20}. Third, it is  interesting to consider settings that  allow for communication between agents and the mechanism, and study the automated design of emergent, one- or two-way communication (c.f.,~\citet{lowe2017multi}).

\section{Acknowledgments}
We  thank Zhe Feng and Nir Rosenfeld for helpful discussions and feedback. The project or effort depicted was or is sponsored, in part, by the Defense Advanced Research Projects Agency under cooperative agreement number HR00111920029, the content of the information does not necessarily reflect the position or the policy of the Government, and no official endorsement should be inferred. Approved for public release; distribution is unlimited. The work of G. Brero was also supported by the SNSF (Swiss National Science Foundation) under Fellowship P2ZHP1191253.

\newpage

\bibliographystyle{plainnat}
\bibliography{bibliography}

\appendix
\section{Omitted Proofs of Section~\ref{sec:mechanisms}}\label{sec:missingproofsmechanisms}
\dynamicprices*
\begin{proof}
	Consider a unit-demand setting with three agents, each with value distributed uniformly and independently on the set  $\{1,3\}$. Note that it is WLOG to only consider prices of $0$ and $2$, and to consider both items having the same price. An example of a welfare-optimal mechanism is to fix an arbitrary order and first set price $p^1=2$. If the first agent does not buy, the mechanism sets prices $p^2=p^3=0$ so both items are allocated. If the first agent does buy, the mechanism next sets price $p^2=2$, and then if the item remains available, sets price $p^3=0$. This mechanism achieves the optimal welfare because it always allocates both items, and every agent with value $3$ gets an item (unless all three agents have value $3$).
	
	Any welfare-optimal mechanism must  set $p^3=0$, because if it visits the final agent, it is optimal to allocate the item to that agent unconditionally. The mechanism must set $p^1=2$ or else it risks allocating an item to an agent with value $1$ and leaving an agent with value $3$ without an item. If the first agent visited does not buy, there are two items and two agents left, so $p^2$ must be $0$ or else the mechanism risks leaving an item unallocated. If the first agent does buy, there is one item and two agents, so $p^2$ must be $2$ or else the mechanism risks allocating the last item to the second agent when they have value $1$ and the third agent has value $3$. Thus, the possible price histories under this (strictly) optimal pricing policy are $(p^1=2,p^2=0,p^3=0)$, $(p_1=2,p_2=2)$, and $(p^1=2,p^2=2,p^3=0)$. The first history has one agent face price $2$ and two agents face price $0$, while the third history has the opposite. So, no matter what order is used, some agent must face price $2$ in the first history and price $0$ in the third history. Thus, personalized static prices are insufficient, and adaptive prices are necessary to achieve optimal welfare.
\end{proof}

\dynamicordercorr*
\begin{proof}
	Consider a unit-demand setting with six agents and two items. Agent 1's value is drawn uniformly from $\{1,15\}$. If it's $15$, agent 2's value is drawn uniformly from $\{2,12\}$; otherwise it is drawn uniformly from $\{3,8\}$. Agents 3 and 4 are the same as agent 2 but with the distributions switched, so $\{2,12\}$ if $v_1=1$ and $\{3,8\}$ otherwise. Agents 5 and 6 have value $4$ deterministically. 
	
	An example of a welfare-optimal mechanism is to visit agent 1 first and set price $p^1=3.5$. This allocates to agent 1 iff they have value $15$ (if value $1$, any other agent is preferred). Then:
	\begin{itemize}
		\item If agent 1 bought, the mechanism visits agent 2 and sets price $p^2=3.5$. This allocates to agent 2 iff they have value $12$ (if value $2$, any other agent is preferred). If an item remains, visit agents 3, 4, and 5 in turn with price $p^3=p^4=p^5=3.5$, which will allocate to an agent with value $8$ if any exist, else to agent 5 with value $4$. 
		\item If agent 1 did not buy, the mechanism visits agents 3 and 4 in turn with prices $p^2=p^3=3.5$. This allocates to agents 3 and/or 4 iff they have value $12$ (if value $2$, any other agent is preferred). If any items remain, visit agents 2, 5, and 6 in turn with price $p^4=p^5=p^6=3.5$, which will allocate to agent 2 if they have value $8$, else to agents 5 and/or 6 who have value $4$.
	\end{itemize}
	
	Price  $3.5$ works because it lies between the ``high" and ``low" values of agents 1-4 and below the values of agents 5 and 6. Because agents 5 and 6 have higher values ($4$) than the other agents' ``low" values ($1$,$2$,$3$), the mechanism would rather allocate to agents 5 or 6 than to an agent with a low value, so the price does not need to be lower than $3.5$. Because the mechanism visits agents in decreasing order of ``high" value, agents 1-4 will only purchase an item if they have a ``high" value and no other agent has a strictly higher value. So the price need not be higher.
	
	The key feature of this mechanism is that it infers agent 1's value from its decision to buy or not, then uses this information to ensure it visits the $\{2,12\}$ agent(s) before the $\{3,8\}$ agents. Before observing agent 1's decision, the mechanism cannot know which agent(s) are $\{2,12\}$ and which are $\{3,8\}$. 
	
	In particular,  any static order mechanism risks visiting one of agents 2-4 while either a) there are at least two other agents who could have value $12$ or $15$, or b) there is one item and at least one agent who could have value $12$.  In this situation, any price less than $8$ risks allocating an item to this agent with value $8$ and leaving an agent with value $12$ or $15$ without an item. Meanwhile, any price $8$ or greater risks not allocating to this agent when they have value $8$ and that is strictly the greatest among remaining agents. Because no price can ensure optimal welfare in this situation, a static order is insufficient, even with access to adaptive prices.
\end{proof}

\dynamicorderprice*
\begin{proof}
	Consider a unit-demand setting with four agents and two items. Agent 1's value is drawn uniformly from $\{1,15\}$. Agent 2's value is drawn uniformly from $\{3,12\}$. Agents 3 and 4 have values drawn uniformly from $\{2,8\}$. All valuations are independent. 
	
	We will build a welfare-optimal mechanism by backwards induction. Because the mechanism knows each agents' value distribution when it visits them, it is sufficient to choose between a price below the agents' ``low" value, one above their ``high" value, and one between them. As such, it is sufficient to consider only prices of $0$, $5$, and $20$. $0$ is below all agents' ``low" values, $5$ between all ``low" and ``high" values, and $20$ above all ``high" values. Notice also that because the items are identical, it is sufficient to consider a mechanism that sets the same price for each item. 
	
	We observe the following fact:  Every agents' ``low" value is below every other agents' expected value.
	
	This implies it is never optimal to set price $0$ unless there are as many items as agents --- the mechanism would rather allocate to any of the remaining agents unconditionally than to this agent's low value. It also implies it is never optimal to use price $20$ when approaching any agent $a$.  
	If $a$ is the last agent, then of course price 20 will result in no sale, which means lowering the price to 0 strictly improves welfare. If $a$ is not the last agent, then let $b$ be the last agent in the sequence. 	Moving $a$ to the end of the sequence, changing $a$'s price from $20$ to $0$, and changing $b$'s price from 0 to 5 strictly increases the expected welfare (we replace allocations to $b$'s low value with allocations to $a$).
	Therefore, a welfare-optimal pricing uses price $5$ as long as there are more agents than items, and drops to price $0$ once there are as many items as there are agents.
	
	We use  notation  $V(i, S, x)$ to denote the expected value obtained by visited agent $i\in S$ first when there's a set of agents $S$ and $x$ items left. We now perform the backward induction calculation. We simplify the calculations using the fact that agents 3 and 4 are interchangeable.
	The following are immediate from the above discussion:
	$$V(3, \{3\}, 1)= 5,\quad V(2,\{2\},1)=7.5,\quad V(1,\{1\},1)=8,\quad V(3, \{3,4\},1) = 6.5,$$
	$$V(2, \{2,3\},1)=8.5 \ > \ V(3,\{2,3\}, 1),\quad V(1, \{1,3\},1)=10 \ > \ V(3,\{1,3\},1),$$
	$$V(1, \{1,2\},1)=11.25 \ > \ V(2,\{1,2\},1).$$

    \noindent	
	We now calculate the expected value for situations where we have 1 item and 3 agents left.
	$$V(1,\{1,2,3\},1)= 15/2 + V(2, \{2,3\},1)/2 = 11.75 > \max\{V(2,\{1,2,3\},1),V(3,\{1,2,3\},1)\},$$
	$$V(1,\{1,3,4\},1)= 15/2 + V(3, \{3,4\},1)/2 = 10.75 > V(3,\{1,3,4\},1),$$
	$$V(2,\{2,3,4\},1)= 12/2 + V(3, \{3,4\},1)/2 = 9.25 > V(3,\{2,3,4\},1).$$
    
    \noindent	
	And now we calculate the expected value for situations where we have 2 items and 3 agents left.
	$$V(1,\{1,2,3\},2) = (15+V(2,\{2,3\},1))/2 + (5+7.5)/2 = 18 > \max\{V(2,\{1,2,3\},2),V(3,\{1,2,3\},2)\},$$
	$$V(1,\{1,3,4\},2) = (15+V(3,\{3,4\},1))/2 + (5+5)/2 = 15.75 > V(3,\{1,3,4\},2),$$
	$$V(3,\{2,3,4\},2) = (8+V(2,\{2,3\},1)/2 + (7.5+5)/2 = 14.5 > V(2,\{2,3,4\},2).$$
	
    \noindent	
	Finally, we calculate the expected welfare of the optimal policy:
	$$ V(1, \{1,2,3,4\},2) = (15 + V(2,\{2,3,4\},1))/2 + V(3, \{2,3,4\},2)/2 $$ 
	$$ = 19.375 > \max\{V(2, \{1,2,3,4\},2),V(3, \{1,2,3,4\},2) \}.$$
	From the above calculation, it shows it is always optimal to approach agent 1 first with a price of 5. If an item is sold, then the policy approaches agent 2 with a price 5, otherwise, it approaches agent 3 (WLOG agent 3 and not 4) with a price 5; therefore, the order is adaptive. In the case an item is not sold to agent 1 and not sold to agent 3, then it is optimal to price the item for agent 2 at 0 to extract full surplus. In case an item is not sold to agent 1 and is sold to agent 3, it is then optimal to price the item for agent 2 at 5. Therefore, prices are adaptive.
\end{proof}

\section{Omitted Proofs from Section~\ref{sec:mech-pomdp}}\label{sec:missing-proofs-mech-pomdp}

\agentsitemssuff*
\begin{proof}
	When maximizing welfare (revenue), at a current state of the MDP, it is optimal to choose the action that maximizes the expected welfare (revenue) obtained by selling the remaining items to the yet-to-arrive agents. This is because the objectives are linear additive. Then, given a set of items and agents with independently drawn valuations, an agents' probability of purchasing an item at a given price is independent of the history. Because of this, the only relevant information, for any history of observations, is the set of agents and items left. 
\end{proof}

\personalizedpricesstate*
\begin{proof}
	Without access to the current, partial allocation, neither the agent order nor prices can be adaptive because they only use information that was known prior to the first policy action. Prices can be personalized, however, because the same agent can always be visited at a particular position in the order, and thus receive different prices than other agents.
\end{proof}

\nonlinearpolicy*
\begin{proof}
	Consider a setting with four agents,  Alice, Bob, Carl and Dan, and three items, $1,2,3$. Alice and Bob always have value $0$ for item $3$, and their value for items $1,2$ is distributed uniformly over the following six options: 	
	\begin{eqnarray*}
		& &v_\mathrm{A}(1)=10,\ v_\mathrm{A}(2)=10,\ v_\mathrm{B}(1)=10,\ v_\mathrm{B}(2)=10,\\
		& &v_\mathrm{A}(1)=0,\ v_\mathrm{A}(2)=0 ,\ v_\mathrm{B}(1)=0,\ v_\mathrm{B}(2)=0,\\	
		& &v_\mathrm{A}(1)=10,\ v_\mathrm{A}(2)=0 ,\ v_\mathrm{B}(1)=0,\ v_\mathrm{B}(2)=0,\\
		& &v_\mathrm{A}(1)=0,\ v_\mathrm{A}(2)=10 ,\ v_\mathrm{B}(1)=0,\ v_\mathrm{B}(2)=0,\\
		& &v_\mathrm{A}(1)=0,\ v_\mathrm{A}(2)=0 ,\ v_\mathrm{B}(1)=10,\ v_\mathrm{B}(2)=0,\\
		& &v_\mathrm{A}(1)=0,\ v_\mathrm{A}(2)=0 ,\ v_\mathrm{B}(1)=0,\ v_\mathrm{B}(2)=10.		
	\end{eqnarray*}

    \noindent	
	The value of Carl and Dan for items $1$ and $2$ is always $0$. 
	
	If there exists an allocation for Alice and Bob with welfare $20$, or there does not exist an allocation for Alice and Bob with value larger than $0$, then Carl's value for item $3$ is $10$ w.p. $1/10$ and $0$ w.p. $9/10$, and Dan's value is a constant $5$. If the best allocation for Alice and Bob yields welfare $10$, Carl and Dan reverse roles, and Dan's value for $3$ is $10$ w.p. $1/10$ and $0$ w.p. $9/10$, and Carl's value is a constant $5$.
	
	In order to extract optimal welfare from Carl and Dan, an optimal policy should order Carl before Dan if Dan has a constant value $5$ for this item, and order Dan before Carl if Carl has a constant value. To get the correct order in which Carl and Dan should go, an optimal policy should order Alice and Bob first.
	
	Assume by symmetry that the optimal policy orders Alice before Bob, and assume  WLOG that if the optimal attainable welfare by Alice and Bob is 20, then Alice is allocated item 1 and Bob is allocated item 2 (a policy might force a certain allocation for this case, but both items must be allocated). Consider a linear policy $\theta$. Consider the following four allocation profiles of Alice and Bob: (i) Alice is allocated item 1, Bob is allocated item 2; (ii) Alice is allocated item 1, Bob is unallocated; (iii) Alice is unallocated, Bob is allocated item 2; and (iv) Alice and Bob are unallocated. In all four scenarios, all input variables of the policy are identical, but two variables take on different values, i.e., $x_A^1$  ($x_B^2$) takes on value 1 if  Alice (Bob) is allocated item 1 (2) and 0 otherwise. 
	
	Recall that we consider policies that calculate scores for each agent given the current input variables, and approach the highest-score agent of the agents yet to arrive.
	Let $\theta_C^1$ ($\theta_D^1$) denote the weight multiplied with variable $x_A^1$ to determine the score of Carl (Dan), and let $\theta_C^2$ ($\theta_D^2$) be the weight multiplied with variable $x_B^2$ to determine the score of Carl (Dan). Let $\theta_C^{-12}$ and $\theta_D^{-12}$ be the rest of the weights used to determine the scores of Carl and Dan, and let $x^{-12}$ bet the rest of input variables besides  $x_A^1$ and $x_B^2$. 
	
	Since in (i) Carl has a higher score than Dan, we have:
	\begin{align}
	x_A^1\cdot \theta_C^1 + x_B^2\cdot \theta_C^2 +  x^{-12}\cdot \theta_C^{-12} & > x_A^1\cdot \theta_D^1 + x_B^2\cdot \theta_D^2 +  x^{-12}\cdot \theta_D^{-12}  \notag
	\\
	\Rightarrow  \quad \theta_C^1 +  \theta_C^2 +  x^{-12}\cdot \theta_C^{-12} & >\theta_D^1 +  \theta_D^2 +  x^{-12}\cdot \theta_D^{-12}. \label{eq:i}
	\end{align}
	
	    \noindent	
	Similarly, from (ii), (iii) and (iv) we have
	\begin{align}
	\theta_C^1  +  x^{-12}\cdot \theta_C^{-12} &< \theta_D^1 +    x^{-12}\cdot \theta_D^{-12}\label{eq:ii}
	\\ 
	\theta_C^2 +  x^{-12}\cdot \theta_C^{-12} &<    \theta_D^2 +  x^{-12}\cdot \theta_D^{-12}. \label{eq:iii} \\ 
	x^{-12}\cdot \theta_C^{-12} & >     x^{-12}\cdot \theta_D^{-12} \label{eq:iv}
	\end{align}	
    \noindent	
	Subtracting Eq.~\eqref{eq:iv} from Eq.~\eqref{eq:ii} gives:
	\begin{eqnarray}
	\theta_C^1 < \theta_D^1.\label{eq:v}
	\end{eqnarray}
	
	    \noindent	
	Adding Eq.~\eqref{eq:v} to Eq.~\eqref{eq:iii} gives:
	\begin{eqnarray*} \theta_C^1 +  \theta_C^2 +  x^{-12}\cdot \theta_C^{-12} <  \theta_D^1 +  \theta_D^2 +  x^{-12}\cdot \theta_D^{-12},
	\end{eqnarray*}
	contradicting Eq~\eqref{eq:i}.
\end{proof}

\subsection{Proof of Theorem~\ref{thm:1}}
In this section, we prove the following theorem.
\thmone*

We prove this theorem through Lemma~\ref{lem:alloc_suff}, which shows the sufficiency part, and Lemma~\ref{lem:lb-space}, which gives a lower bound on the statistic's space complexity.
\begin{lemma}\label{lem:alloc_suff}
	For any value distribution, the  allocation matrix along with the agents who have not yet received an offer is sufficient to determine the optimal policy, whatever the design objective.
\end{lemma}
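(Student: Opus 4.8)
The plan is to show that the allocation matrix $\alloc^{t-1}$ together with the set of remaining agents $\rho^{t-1}_{\text{agents}}$ is enough to reconstruct the entire observation history of any fixed deterministic policy, and then invoke the standard fact that POMDPs admit deterministic optimal policies that depend only on the observation history. First I would recall that a POMDP can be solved optimally by a deterministic policy that maps the history of actions and observations to the next action; this follows from the Markov property of the underlying MDP on belief states~\citep{bellman1957markovian}. So it suffices to argue that, for such a policy $\pi$, the pair $(\alloc^{t-1}, \rho^{t-1}_{\text{agents}})$ is a sufficient statistic for the history $(a^1, o^1, \dots, a^{t-1}, o^{t-1})$ generated by $\pi$.

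The key step is an induction on the round $t$. The claim is that given $(\alloc^{t-1}, \rho^{t-1}_{\text{agents}})$ one can recover the full sequence of agents visited, bundles chosen, and hence (by replaying $\pi$) the sequence of posted prices. For the induction, note that the set of agents visited in rounds $1,\dots,t-1$ is exactly $[n] \setminus \rho^{t-1}_{\text{agents}}$, and the bundle each visited agent received is read directly off the rows of the allocation matrix $\alloc^{t-1}$. The subtlety is recovering the \emph{order} in which agents were visited: I would argue this is forced, because $\pi$ is deterministic — the agent $i^1$ and prices $p^1$ chosen in round $1$ are determined by the (empty) initial history; then the bundle $x^1$ is the row of $\alloc^{t-1}$ corresponding to $i^1$; this reconstructs the round-$1$ observation $o^1 = x^1$; feeding $(a^1, o^1)$ back into $\pi$ gives $i^2, p^2$; and so on. Each step peels off one more visited agent in a uniquely determined order until the remaining-agent set matches $\rho^{t-1}_{\text{agents}}$. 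Hence the history — and therefore the action $a^t = \pi(\text{history})$ — is a function of $(\alloc^{t-1}, \rho^{t-1}_{\text{agents}})$ alone, which is exactly what it means for this pair to be a sufficient statistic supporting an optimal policy. Since the reconstruction argument makes no use of independence of valuations or of the form of the objective $\mathsf{g}$, the conclusion holds for arbitrary (correlated) value distributions and every design objective.

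The main obstacle I anticipate is being careful about what ``recover the history'' buys us: the allocation matrix does \emph{not} record the prices that were posted, so the argument must route the price recovery through the determinism of $\pi$ rather than through the statistic directly. This is fine for the sufficiency claim — we only need that \emph{some} optimal policy is expressible as a function of this statistic, not that every policy is — but it means the lemma is inherently about deterministic policies, and I would state that restriction explicitly (it is harmless by the cited optimality of deterministic history-dependent policies). A secondary point worth a sentence is that for stochastic policies the reconstruction can fail, which is precisely why the statement is phrased in terms of determining \emph{an} optimal policy. Finally I would remark that the size bound $O(\min\{n,m\}\log(\max\{n,m\}))$ claimed in Theorem~\ref{thm:1} follows immediately from this encoding, since an allocation assigns at most $\min\{n,m\}$ nonempty bundles and each item/agent index needs $O(\log \max\{n,m\})$ bits, deferring the matching lower bound to Lemma~\ref{lem:lb-space}.
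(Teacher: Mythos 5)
Your proposal is correct and matches the paper's own argument: both reduce to a deterministic history-dependent optimal policy via \citet{bellman1957markovian} and then reconstruct the full observable history (visited agents, posted prices, chosen bundles) from the allocation matrix and remaining-agent set by inductively replaying the deterministic policy. The only difference is presentational --- the paper phrases the induction as extending a known observation prefix, while you replay from the empty history --- which is the same argument in substance.
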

\begin{proof}
	The observable history that a policy generates is the agent  approached in each round, the prices offered, and the items  purchased  (if any). 
	We first notice that given the entire observable history, it is without loss to assume the optimal policy is deterministic (this follows, for instance, from \citet{bellman1957markovian}). 
	
	We show that for any fixed, deterministic policy, that the allocation information and the list of visited agents is sufficient to fully recover the entire observable history. Therefore, this information is sufficient in order to take the optimal action of the optimal policy. The proof is by induction. Observations are agent visited, prices offered, and item(s) selected by the agent. The base case is the empty history. For the  inductive case, consider that  knowledge of the sequence of observations so far and the new allocation matrix and list of remaining agents reveals the (i) agent just visited, (ii) price offered since the policy is deterministic and this follows from the sequence of observations so far, and (iii) the item(s) selected by the agent.
\end{proof}

We can use this lemma to upper-bound the space complexity of the statistic necessary to support an optimal policy:
\begin{corollary}\label{clm:ub-nlogm}
	For unit-demand bidders, there exists an optimal policy that uses a statistic of the observation history of space complexity $O(n\log m)$.
\end{corollary}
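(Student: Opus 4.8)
The plan is to combine the sufficiency guarantee of Lemma~\ref{lem:alloc_suff} with a compact encoding argument that exploits the structure of unit-demand allocations. First I would invoke Lemma~\ref{lem:alloc_suff}: for any value distribution and any design objective, the pair consisting of (i) the allocation matrix and (ii) the set of agents who have not yet received an offer is a sufficient statistic to determine an optimal (deterministic) policy. So it suffices to bound the number of bits needed to represent such a pair when valuations are unit-demand.

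The key observation is that under unit-demand valuations, every agent purchases at most one item over the course of the mechanism, so the (temporary or final) allocation matrix has at most one nonzero entry per row. Hence the allocation is equivalent to a vector in $(\,[m]\cup\{\bot\}\,)^n$, where the $i$-th coordinate records either the unique item allocated to agent $i$ or the symbol $\bot$ for ``unallocated''; this requires $n\lceil\log_2(m+1)\rceil = O(n\log m)$ bits rather than the $nm$ bits of the naive matrix encoding. The set of not-yet-visited agents is a subset of $[n]$ and is encoded by a bit vector in $\{0,1\}^n$, i.e.\ $O(n)$ bits, which is absorbed into $O(n\log m)$. Summing these gives an $O(n\log m)$-bit statistic that is maintained across rounds (each round updates one coordinate of the allocation vector and flips one bit of the visited-agents vector) and, by Lemma~\ref{lem:alloc_suff}, is sufficient to support an optimal policy. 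Note we need not separately store the remaining items, since in the unit-demand setting (indeed in any setting) the set of available items is determined by the allocation.

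I do not expect any real obstacle here: the argument is a counting/encoding observation layered on top of the already-established Lemma~\ref{lem:alloc_suff}. The only points requiring a line of care are (a) checking that the compact vector representation of the allocation is genuinely maintainable online and loses no information relative to the matrix (immediate from unit-demand), and (b) confirming that nothing beyond the allocation and the visited-agent set is needed — which is exactly what Lemma~\ref{lem:alloc_suff} provides.
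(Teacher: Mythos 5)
Your proposal is correct and follows essentially the same route as the paper: both rest on Lemma~\ref{lem:alloc_suff} for sufficiency and then observe that a unit-demand allocation is one of at most $(m+1)^n$ possibilities, hence encodable in $O(n\log m)$ bits (your explicit vector-in-$([m]\cup\{\bot\})^n$ encoding is just a concrete realization of the paper's counting argument). Your additional care about the $O(n)$ bits for the set of unvisited agents and the online maintainability of the encoding is fine but not a departure in substance.
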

\begin{proof}
	Follows from the fact that the number of allocations of unit-demand bidders in a market with $n$ agents and $m$ items is $O(m^n)$, which takes $O(n\log m)$ bits to represent.
\end{proof}

\begin{corollary}\label{clm:ub-mlogn}
	For any valuations of agents,  there exists an optimal policy that uses a statistic of space complexity $O(m\log n)$.
\end{corollary}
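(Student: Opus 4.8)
The plan is to refine the sufficiency direction of Theorem~\ref{thm:1} (i.e.\ Lemma~\ref{lem:alloc_suff}) so that the statistic it produces can be stored in $O(m\log n)$ bits for arbitrary valuations, mirroring how Corollary~\ref{clm:ub-nlogm} handled the unit-demand case. By Lemma~\ref{lem:alloc_suff} (together with the existence of deterministic optimal policies for POMDPs), I would fix an optimal deterministic $\MM$ policy $\pi$ whose action after any observable history is a function of the current allocation together with the agents already visited. The goal is then to show that, for \emph{any} deterministic $\MM$ policy, this information is itself a deterministic function of the current allocation matrix $\alloc$ plus a single integer $t$ recording how many rounds have elapsed; consequently $\pi$'s action depends only on the pair $(\alloc, t)$.

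The core step is a reconstruction claim: for a fixed deterministic $\MM$ policy, the entire observable history is uniquely recovered from $(\alloc, t)$. I would prove this by forward induction on the round index. Determinism fixes the first visited agent $i^1$ and its posted prices from the empty history; because an $\MM$ visits each agent at most once, the bundle $i^1$ purchased in round $1$ is \emph{exactly} the set of items assigned to $i^1$ in the final allocation $\alloc$, so the round-$1$ observation is read off directly; applying $\pi$ to the reconstructed length-$1$ history recovers the round-$2$ action, its bundle is again the items assigned to $i^2$ in $\alloc$, and so on, with the counter $t$ telling us when to halt the reconstruction. Equivalently, one can argue by first divergence: two $\pi$-consistent histories with the same $(\alloc, t)$ must coincide, since the earliest round at which they could differ is a bundle choice by some agent, and because that agent is never revisited the two resulting final allocations would then disagree on that agent's bundle, a contradiction.

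Given the claim, the refactored policy defined on the statistic $(\alloc, t)$ makes the same (optimal) choices as $\pi$, so it is optimal. For arbitrary valuations a partial allocation is a disjoint assignment of items to agents, hence storable as one label from $[n]\cup\{\bot\}$ per item, i.e.\ $O(m\log n)$ bits, while $t\in\{0,\dots,n\}$ costs $O(\log n)$ bits, for a total of $O(m\log n)$. (This statistic also implicitly carries the remaining-agent set, namely $[n]$ minus the $t$ agents the reconstruction produces, which is what the discussion following Theorem~\ref{thm:1} refers to.) The one genuine subtlety, and the reason the bare allocation matrix is not enough, is that $\alloc$ does not record agents who were visited but purchased nothing; the round counter repairs precisely this gap, and the whole reconstruction goes through only because the $\MM$ structure guarantees each agent is visited at most once. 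This is the step I would write most carefully, though it is bookkeeping rather than anything technically deep.
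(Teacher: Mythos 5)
Your proposal is correct, but it is more elaborate than (and in one respect more careful than) the paper's argument. The paper proves this corollary in one line: by Lemma~\ref{lem:alloc_suff} the allocation matrix plus the set of unvisited agents is a sufficient statistic, and since each item can be assigned to one of $n+1$ options (some agent, or nobody), there are only $O(n^m)$ allocations, hence $O(m\log n)$ bits suffice. You instead re-run the reconstruction argument underlying Lemma~\ref{lem:alloc_suff}, replacing the explicit set of visited agents with a round counter $t$ and showing that $(\alloc,t)$ determines the entire observable history of a fixed deterministic policy by forward simulation. The substance is the same --- both rest on determinism of the optimal policy and the fact that an $\MM$ visits each agent at most once, so an agent's row of the partial allocation records exactly what it bought --- but your version buys something real: the statistic in Lemma~\ref{lem:alloc_suff} nominally includes the set of agents not yet visited, which cannot be read off the allocation alone (an agent may have been visited and bought nothing) and which naively costs $\Omega(n)$ bits, potentially exceeding $O(m\log n)$ when $n\gg m\log n$. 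Your $O(\log n)$-bit counter, combined with the policy-driven reconstruction, closes that gap cleanly, whereas the paper's counting step silently ignores it. So your route is a legitimate and slightly tighter proof of the same bound; the only cost is that you re-prove the reconstruction lemma rather than citing it.
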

\begin{proof}
	In every allocation, each item has $n+1$ possible options to be allocated (the +1 is for the option it is not allocated). Therefore, the number of allocations is $O(n^m)$, which takes $O(m\log n)$ bits to represent. 
\end{proof}

\begin{lemma}
	There exists a unit-demand setting with correlated valuations where the optimal policy must use a statistic of size $\Omega\left(\min\{n,m\}\log\left(\max\{n,m\}\right)\right)$.\label{lem:lb-space}
\end{lemma}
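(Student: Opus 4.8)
The plan is to exhibit a family of unit-demand instances in which the number of ``meaningfully distinct'' belief states that an optimal policy must tell apart grows like $\max\{n,m\}^{\Theta(\min\{n,m\})}$, so that by an information-theoretic counting argument any statistic supporting an optimal policy needs $\Omega(\min\{n,m\}\log\max\{n,m\})$ bits. By the symmetry of the bound in $n$ and $m$, I would present one construction and note that swapping the roles of agents and items handles the other regime; so assume WLOG $m \le n$ and aim for a lower bound of $\Omega(m\log n)$.

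The construction I would use: take $m$ items and $n$ agents, split the agents into a first ``signalling'' block $B_1$ and a second ``consumer'' block $B_2$. The values in $B_1$ are correlated in such a way that the pattern of which items the $B_1$-agents buy (equivalently, the allocation matrix restricted to $B_1$) encodes an arbitrary injective-ish map from items to labels in $[n]$ — i.e. the realized history effectively ``names,'' for each of the $m$ items, one of roughly $n$ possible conditional distributions that the remaining $B_2$ agents will have for that item. Concretely, one can let the joint distribution be a mixture over $\approx n^m$ hypotheses, where hypothesis $h$ (a function $[m]\to[n]$) makes the $B_1$ purchases deterministically reveal $h$, and then under hypothesis $h$ the optimal continuation (which $B_2$-agent to offer item $j$, and at what price) depends on $h(j)$ for every $j$ — e.g. agent $h(j)$ is the unique remaining agent with a high value for item $j$. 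Then two histories leading to different hypotheses $h \ne h'$ require the policy to take different optimal actions, so any sufficient statistic must separate them, giving at least $n^{\Omega(m)}$ distinct statistic values and hence $\Omega(m\log n)$ bits.

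The key steps, in order: (1) fix the partition and specify the correlated prior as a uniform mixture over the $n^m$ hypotheses, designing $B_1$'s marginal behavior under the obvious price schedule so that the induced allocation matrix on $B_1$ is a bijection onto the hypothesis; (2) verify that the optimal policy on each hypothesis branch is forced — i.e., show that getting item $j$ to agent $h(j)$ is strictly better than any alternative, so the optimal action genuinely depends on the full tuple $(h(1),\dots,h(m))$ and not on any lossy summary; (3) conclude via a pigeonhole/counting argument: if a statistic had fewer than $n^{m}$ reachable values then two distinct hypotheses collapse to the same statistic value, contradicting step (2) since the optimal action differs; (4) translate the counting bound into the stated $\Omega(\min\{n,m\}\log\max\{n,m\})$ form and remark on the $n \leftrightarrow m$ symmetry to cover $m > n$.

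The main obstacle I expect is step (2): ensuring that the optimal policy is \emph{strictly} sensitive to every coordinate of the hypothesis simultaneously, rather than, say, being indifferent on some branches or being recoverable from a cleverer compressed statistic (the allocation matrix on $B_1$ is already $O(m\log n)$ bits, so the point is to show one cannot do asymptotically better). This requires a careful choice of value supports in $B_2$ so that a misidentification of even one $h(j)$ costs a constant amount of the objective in expectation, and so that no two hypotheses are behaviorally equivalent. A secondary technicality is checking feasibility — that $B_1$ can be made to reveal $h$ while still being part of a genuine unit-demand instance with prices causing the intended purchases — but this is routine once the supports are spaced (as in the ``high/low value'' gadgets already used in the propositions of Appendix~\ref{sec:missingproofsmechanisms}).
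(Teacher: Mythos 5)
Your proposal follows essentially the same route as the paper's proof: a correlated prior that is a uniform mixture over roughly $\max\{n,m\}^{\min\{n,m\}}$ hypotheses (the paper uses random injective matchings from the smaller side to the larger side), a first block of agents whose purchases deterministically reveal the realized hypothesis, a continuation gadget whose optimal action genuinely differs across hypotheses (the paper uses one extra item and two extra agents whose optimal posted price is a hypothesis-dependent rescaling of the example in Proposition~\ref{personalized_price}), and a pigeonhole count over the statistic's reachable values. The approach is sound and matches the paper, including your observation that the allocation-matrix statistic already achieves $O(\min\{n,m\}\log\max\{n,m\})$ bits, so the bound is tight.
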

\begin{proof}
	
	Let $N=\{i_1,\ldots, i_n\}$ and $M=\{1,\ldots, m\}$ be the set of agents and items. In addition, there are two special agents $a$ and $b$, and a special item $x$, who's value will depend on the matching of agents in $N$ to items in $M$.

	We first prove the lemma for the case that $m>2n$.
	The valuations of agents in $N$ are realized as follows:
	
	\begin{itemize}
		\item Set $L=\emptyset$.
		\item For $\ell =1,\ldots, n$:
		\begin{enumerate}
			\item Choose $j_\ell$ uniformly at random from $M\setminus L$.
			\item $L:= L\cup \{j_\ell\}$.
			\item $v_{i_\ell}(j)=\begin{cases}
			1\quad j_\ell\\
			0\quad \mbox{otherwise}
			\end{cases}$.	
		\end{enumerate}
	\end{itemize}
	
	The agents $a$ and $b$  have an i.i.d.~valuation that depends on the matching between $N$ and $M$, which is  a scaled version of the example in Proposition~\ref{personalized_price}, with item $x$ serving as the item sold in the argument used in the proof of that proposition. Therefore, depending on the matching between $N$ and $M$, the price of the first agent in $\{a,b\}$ should be different.
	The number of possible matchings between $N$ and $M$ is $$\binom{m}{n}n!= \frac{m!}{(m-n)!}= m\cdot(m-1)\cdot\ldots\cdot(m-n+1)>(m/2)^n,$$ where we use the fact that $n<m/2$ in the last inequality. To represent this many potential prices, we need a state of size $\Omega(n\log m)$.
	
	For the case of $m=O(n)$, the construction is very similar to the case that $m>2n$. The setting first draws one of $\Omega(m^n)$ possible matchings between $n$ and $m$, and according to the matching drawn, the algorithm should use a different price for $a$, $b$ and item $x$. To represent $\Omega(m^n)$ possible matchings, we need $\Omega(n\log m)$ space.	
\end{proof}
    \noindent	
This completes the proof of Theorem~\ref{thm:1}. A corollary of Corollaries~\ref{clm:ub-nlogm}, \ref{clm:ub-mlogn}, and Lemma~\ref{lem:lb-space} is the following:

\begin{corollary}
	The space complexity of the optimal policy's statistic for unit-demand bidders is\newline $\Theta\left(\min\{n,m\}\log\left(\max\{n,m\}\right)\right)$.
\end{corollary}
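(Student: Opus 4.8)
The plan is to read off the claimed $\Theta$-bound by pairing an upper bound, assembled from the two sufficiency corollaries, with the matching lower bound of Lemma~\ref{lem:lb-space}. Since Lemma~\ref{lem:lb-space} already delivers the lower bound, all that remains is to argue that the two separate upper bounds --- $O(n\log m)$ from Corollary~\ref{clm:ub-nlogm} and $O(m\log n)$ from Corollary~\ref{clm:ub-mlogn} --- together yield an optimal policy whose statistic occupies $O(\min\{n,m\}\log\max\{n,m\})$ bits.

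First I would case-split on whether $n \le m$ or $m < n$. When $n \le m$, invoke Corollary~\ref{clm:ub-nlogm}: there is an optimal policy whose statistic of the observation history has size $O(n\log m)$, and since $\min\{n,m\}=n$ and $\max\{n,m\}=m$ in this case, this is exactly $O(\min\{n,m\}\log\max\{n,m\})$. When $m < n$, invoke Corollary~\ref{clm:ub-mlogn} instead to get an optimal policy with statistic of size $O(m\log n) = O(\min\{n,m\}\log\max\{n,m\})$. (Equivalently, one could always carry whichever of the two encodings is shorter; the two routes agree up to constants, since $\min\{n\log m,\,m\log n\} = \Theta(\min\{n,m\}\log\max\{n,m\})$.) Either way, the existence of an optimal policy at all is guaranteed by the reduction through the allocation-matrix-plus-remaining-agents statistic established in Lemma~\ref{lem:alloc_suff}, so we indeed obtain an optimal policy using $O(\min\{n,m\}\log\max\{n,m\})$ space.

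For the lower bound, Lemma~\ref{lem:lb-space} exhibits a unit-demand setting with correlated valuations in which every optimal policy requires a statistic of size $\Omega(\min\{n,m\}\log\max\{n,m\})$. Combining the two directions gives the tight $\Theta(\min\{n,m\}\log\max\{n,m\})$ bound and completes the proof.

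The step carrying the real weight is Lemma~\ref{lem:lb-space}, which we use as a black box here: its delicate content is the construction of correlated valuations whose induced matching between $n$ ``signal'' agents and $m$ items must be decoded from the observation history in order to price a downstream copy of the Proposition~\ref{personalized_price} instance correctly, producing $\Omega((m/2)^n)$ distinguishable belief states when $m > 2n$ and, symmetrically, $\Omega(m^n)$ when $m = O(n)$. Against that, the corollary's own argument is merely the bookkeeping of choosing the smaller of the two sufficiency bounds; the only thing worth checking carefully is that this choice genuinely reproduces the symmetric quantity $\min\{n,m\}\log\max\{n,m\}$ rather than just one of $n\log m$ or $m\log n$.
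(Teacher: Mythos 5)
Your proof is correct and follows the same route as the paper, which likewise obtains this corollary by combining the two sufficiency upper bounds (Corollaries~\ref{clm:ub-nlogm} and~\ref{clm:ub-mlogn}, choosing whichever applies to the smaller of $n$ and $m$) with the lower bound of Lemma~\ref{lem:lb-space}. Your extra observation that $\min\{n\log m,\,m\log n\}=\Theta\left(\min\{n,m\}\log\max\{n,m\}\right)$ is a correct and slightly more explicit justification of the bookkeeping the paper leaves implicit.
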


\section{Experimental Results}
In this section, we provide an extended description of the settings we tested in Part 2 and Part 3 of the experimental section of our paper. When running our experiments, we normalize the valuations such that the highest possible value is 1. This is done by dividing each value of the settings listed below by the highest possible value in that setting.

\subsection{Part 2: Theory-driven Experiments (Welfare)}\label{sec:theory-driven-experiments}

\paragraph{Colors.}
In this environment, there are $30$ unit-demand agents, $10$ red, $10$ yellow,  and $10$ blue, and
there are also $20$ items,  these items including $10$ red  items and $10$ yellow items. Red agents have value $1$ for each red item and $0$ for each yellow item. Yellow agents have value $1$ for each yellow item and $0$ for each red item.  
The valuation of each blue agent is defined as follows:
\begin{itemize}
	\item Draw $x\sim U[0,1].$
	\item With probability $x$, the agent has value $2$ for each red item and $0$ for each yellow one. Otherwise, the agent has value $2$ the yellow items and $0$ the red ones.
\end{itemize}

In a welfare-optimal allocation, each blue agent receives their preferred item type and the remaining red items go to red agents and yellow items to yellow agents. The optimal social welfare is $30$. In this environment, there exists an optimal static $\MM$  that prices $p\in(0,1)$ each item and lets blue agents go first. 

\paragraph{Two Worlds.}
In this environment, there are $10$ unit-demand agents and $1$ item. Valuations are realized as follows:
\begin{itemize}
	\item With probability $1/2$, we are ``high regime:'' each value is drawn uniformly at random from the set $\{0.6,1\}$.
	\item Otherwise, we are in ``low regime,'' and each value is drawn uniformly at random from the set $\{0.1,0.4\}$.
\end{itemize}
For a welfare-optimal allocation, it is enough to use a static $\MM$  with personalized prices, e.g., $p_i=0.9$ for $i\in\{1,..,5\}$ and $p_i = 0.2$ for $i\in\{6,..,10\}$.  

\paragraph{Inventory.}
In this environment, we have $20$ unit-demand agents and $10$ identical items. Each agent's value is drawn uniformly at random from the set $\{1/2,1\}$. 
Here, a welfare-optimal $\MM$  needs to adapt prices depending on agents' purchases: In the first few rounds, each item is priced $p \in (0.5,1)$. Then, when the number of remaining items is equal to the number of remaining agents, the price is lowered to below $0.5$. Note that this welfare-optimal outcome cannot be achieved by a personalized, static price mechanism, as the number of the round at which the price should drop depends on the agents' demand and cannot be determined ex ante.

\paragraph{Kitchen Sink.}
This environment consists of 3 unit-demand agents and 3 non-identical items $A$, $B$, and $C$. The agent valuations are realized as follows:
\begin{itemize}
	\item With probability $1/2$, agent $1$ has value 0.01 for $A$, agent $2$ has value $1$ for $C$ and agent $3$  values item $C$ either $5$ (with probability 0.2) or $0.5$ (with probability 0.8). All the other values are $0$.
	\item Otherwise, agent $1$ has value 0.01 for $B$, agent $3$ has value $0.499$ for $C$ and agent $2$  values item $C$ either $2$ (with probability 0.2) or $0$ (with probability 0.8). All the other values are $0$.
\end{itemize}

Here, a welfare-optimal $\MM$  needs to adapt both the  order and  the prices. First, agent $1$ is considered with price $0$ for all items. Then, 
\begin{itemize}
	\item If agent $1$ buys item $A$,  the mechanism should visit agent $3$ with a price between $0.5$ and $5$ for item $C$ , and finally consider agent $2$ with a price smaller than $1$ for item $C$ (the price for item $B$ can be arbitrary for agents 1,2). 
	\item If agent $1$ buys $B$,  the mechanism should visit agent $2$ with a price between $0$ and $2$ for item $C$, and  finally consider agent 3 with a price smaller than 1 for item $C$ (the price for item $B$ can be arbitrary for agents 1,2). .
\end{itemize}

\paragraph{ID.}
This environment consists of 6 unit-demand agents and 2 identical items. The value of agents 1, 2, and 3 for either one of these items is drawn uniformly at random from the set $\{0,60\}$, while the value of agents 4, 5, and 6 is realized as follows:
\begin{itemize}
	\item If only agent 1 has value $60$, then agent 4's value is drawn uniformly at random from the set $\{40,0\}$, and the values of agents 5 and 6 are drawn uniformly at random  from the set $\{21,0\}$. 
	\item If only agent 2 has value $60$, then agent 5's value is drawn uniformly at random from the set $\{40,0\}$, and the values of agents 4 and 6 are drawn uniformly at random from the set $\{21,0\}$.
	\item If only agent 3 has value $60$, then agent 6's value is drawn uniformly at random from the set $\{40,0\}$, and the values of agents 4 and 5 are drawn uniformly at random from the set $\{21,0\}$. 
	\item Otherwise, the value of agents 4, 5, and 6, are all  0. 
\end{itemize}

The welfare-optimal mechanism first considers agents 1, 2, and 3, with an identical price between 0 and 60 for both items.
\begin{itemize}
	\item If only agent 1 takes an item, then, agent 4 should be visited before agents 5 and 6, with a price between 0 and 40, and then agents 5 and 6 should be visited with a price between 0 and 21.  
	\item If only agent 2 takes an item, then, agent 5 should be visited before agents 4 and 6, with a price between 0 and 40, and then agents 4 and 6 should be visited with a price between 0 and 21.  
	\item If only agent 3 takes an item, then, agent 6 should be visited before agents 4 and 5, with a price between 0 and 40, and then agents 4 and 5 should be visited with a price between 0 and 21.  
\end{itemize}

Note that this mechanism cannot be implemented by a policy that only accesses information about  the remaining agents and items.

\subsection{Part 3: Beyond Unit-demand, and Beyond Welfare Maximization.}\label{sec:other-settings}

\paragraph{Additive-across-types.}
This environment consists of 10 agents, 2 units of item $A$, and 4 units of item $B$. Agents' valuations are additive across types and unit-demand within types, meaning that an agent's value for a  bundle $x$ is given by the sum of the agent's value for item $A$, if $x$ contains at least one unit of item $A$, and the agent's value for item $B$, if $x$ contains at least one unit of item $B$. Each agent's values for each item type is  distributed as in the simple correlated setting, with correlation parameter $\delta = 0.5$.

\paragraph{Revenue Maximization.}
Here we use the same settings with 20 agents and 5 identical items we used in the simple correlated setting, with $\delta = 0.5$.

\paragraph{Max-Min Fairness}
In this environment, there are 9 unit-demand agents of which we say that one of the agents is orange, four of the agents are blue, and four of the agents are red,
and there are five black items and five white items. 
Agent values are realized as follows:
\begin{itemize}
	\item With probability 1/2: 
	\begin{itemize}
		\item The value of the orange agent for the black items is $U[0.5,1]$ and its value for the white items is 0.
		\item Blue agents' values are drawn i.i.d.~from the distribution $U[0.4,0.5]$ for black items, and  from $U[0,0.25]$ for the white items.
		\item Red agents' values are drawn i.i.d.~from $U[0.9,1]$ for the black items, and from $U[0.4,0.5]$ for the white items.
	\end{itemize}   
	\item Otherwise: (switching roles)
	\begin{itemize}
		\item The value of the orange agent for the black items is 0 and for  the white items is  $U[0.5,1]$.
		\item Red agents' values are drawn i.i.d.~from the distribution $U[0.4,0.5]$ for black items, and  from $U[0,0.25]$ for the white items.
		\item Blue agents' values are drawn i.i.d.~from $U[0.9,1]$ for the black items, and from $U[0.4,0.5]$ for the white items.
	\end{itemize}   
\end{itemize}

An optimal $\MM$  visits the orange agent first with price 0. If this agent takes a black item, the mechanism next visits blue agents, letting them take all the remaining black items. Otherwise, if the orange agent takes a white item, the mechanism should consider the red agents  next, letting them take all the black items.

The max-min welfare of such a mechanism is $\ge 0.4$, as this is the minimal welfare any agent can get. Under a static $\MM$, the max-min welfare of at most $0.25$.
If, for instance, the orange agent takes a black item and then some red agents arrive before blue agents, red agents might take more black items, which results in a blue agent taking a white item and yields a max-min welfare of at most $0.25$.

\section{Characterization Results for Personalized Static Price (PSP) Mechanisms}\label{sec:personalizedchar}

We further characterize the class of Personalized Static Price (PSP) mechanisms with results that do not appear in the main paper.
In Proposition~\ref{dynamic_order_price} we show that both adaptive prices and adaptive order are needed when maximizing welfare with identical items and independently drawn unit-demand valuations. We show that in the same setting, if we restrict ourselves to mechanisms in the PSP class, adaptive order is not needed (Proposition \ref{static_order_price}). If, however, the items are non-identical (Proposition~\ref{dynamic_order_diff}), or the valuations are correlated (Proposition~\ref{dynamic_order_corr} in the main paper),  then adaptive order might  be needed.
\begin{proposition}
	\label{static_order_price}
	For independently distributed unit-demand valuations and identical items, there exists an optimal PSP mechanism that uses static order.
\end{proposition}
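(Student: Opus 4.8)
The plan is to prove the stronger claim that, once the (possibly randomized) prices of a PSP mechanism are fixed, the \emph{greedy static order} --- always approach the not-yet-visited agent with the largest conditional expected value --- is welfare-optimal even among all policies that are allowed an adaptive order. Since this rule never consults the observation history (it is simply ``visit the agents in a fixed order of decreasing key''), it is a static order, and averaging over the randomization in the prices then produces an optimal PSP mechanism with static order.

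First I would fix a price vector $p=(p_1,\dots,p_n)$. Because the items are identical, it is without loss of generality to have each agent $i$ face a single per-unit price $p_i$ (a unit-demand agent only cares about the cheapest affordable copy), so for each $i$ let $B_i=\mathbf{1}[v_i\ge p_i]$ be the indicator that agent $i$ would buy when approached while inventory remains, and let $\mu_i=\E[v_i\mid v_i\ge p_i]$. Since the items are identical, whether a visited agent buys depends only on $B_i$ and on whether \emph{any} item is still available --- not on which items remain --- so the whole interaction is driven by the independent coins $B_1,\dots,B_n$. Using the fact (already invoked for Lemma~\ref{lem:alloc_suff}) that an optimal policy may be taken deterministic, I would condition on the realization of all the $B_a$'s: the set $A$ of agents who end up with an item becomes a deterministic function of the chosen order, with $A\subseteq\{a:B_a=1\}$ and $|A|=\min\{m,\#\{a:B_a=1\}\}$, and --- crucially --- $A$ does not depend on the cardinal values themselves, only on the $B_a$'s. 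By independence, $\E[v_i\mid \{B_a\}_a]=\mu_i$ for every $i$ with $B_i=1$, hence $\E[\text{welfare}\mid\{B_a\}_a]=\sum_{i\in A}\mu_i$.

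The remaining step is an easy per-realization optimization. For a fixed realization $\{B_a\}_a$, no matter what (adaptive) order a policy uses, $A$ is some size-$\min\{m,\#\{a:B_a=1\}\}$ subset of the realized buyers, so $\sum_{i\in A}\mu_i$ is at most the sum of the $\min\{m,\#\{a:B_a=1\}\}$ largest values $\mu_i$ over the buyers. The greedy static order --- which visits \emph{all} agents in a fixed order of decreasing $\mu_i$ --- makes the visited buyers appear in decreasing-$\mu$ order, so its first $m$ encountered buyers are precisely the top-$\mu$ buyers, attaining this upper bound for every realization. Taking expectations over $\{B_a\}_a$ shows that the greedy static order weakly dominates every adaptive-order policy for this price vector; reintroducing the prior over price vectors then gives a static-order PSP mechanism whose expected welfare is at least that of any PSP mechanism, and in particular at least that of a welfare-optimal one.

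I expect the only delicate point to be the conditioning argument --- making precise that the recipient set $A$ is a function of the buy-coins alone, which is exactly what licenses replacing each realized $v_i$ by $\mu_i$ inside the conditional expectation --- together with recording why both hypotheses are essential: with non-identical items the buy event is no longer a single history-independent coin (cf.\ Proposition~\ref{dynamic_order_diff}), and with correlated valuations one would have $\E[v_i\mid\{B_a\}_a]\ne\mu_i$, so that observing other agents' decisions updates beliefs and an adaptive order can strictly help (cf.\ Proposition~\ref{dynamic_order_corr}). Everything else is bookkeeping.
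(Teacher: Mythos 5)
Your argument is correct, but it reaches the conclusion by a genuinely different route than the paper. The paper fixes the prices, defines the same key quantities $\mu_i=\E[v_i\mid v_i>p_i]$ and $q_i=\Pr[v_i>p_i]$, and then runs a backward induction on the pair (remaining agents, remaining items): an exchange argument shows that visiting an agent maximizing $\mu_i$ first is weakly better than visiting any other agent $j$ first, where the continuation value after the suboptimal first visit is evaluated by invoking the induction hypothesis; this yields optimality of the descending-$\mu$ order only in expectation and only after some algebra. You instead condition on the buy indicators $B_i=\mathbf{1}[v_i\ge p_i]$, use the two hypotheses exactly once each --- identical items plus a single static price per agent make the trajectory, hence the allocated set $A$, a function of the coins and the policy alone, and independence licenses replacing each allocated $v_i$ by $\mu_i$ inside the conditional expectation --- and then solve the resulting per-realization problem exactly: any order allocates to at most $\min\{m,\#\text{buyers}\}$ realized buyers, so the conditional welfare is bounded by the sum of the top-$\mu$ buyers, and the descending-$\mu$ static order attains this bound pointwise. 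This gives a realization-by-realization domination rather than an expectation-level one, avoids the induction entirely, and makes visible precisely why the hypotheses are needed (cf.\ Propositions~\ref{dynamic_order_diff} and~\ref{dynamic_order_corr}). One shared simplification to flag: both your proof and the paper's implicitly assume each agent faces a single price across the identical copies; if a PSP mechanism could post different static prices on different copies to the same agent, the effective price would depend on which copies remain and the ``single coin'' reduction would need an extra (easy) normalization step. Also, your parenthetical $|A|=\min\{m,\#\{a:B_a=1\}\}$ presumes all agents are visited; only the inequality $|A|\le\min\{m,\#\{a:B_a=1\}\}$ is needed, and that is what your bound actually uses.
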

\begin{proof}
	We show that for independently distributed unit-demand valuations and identical items, an optimal policy is to order the agents according to the expected value conditioned on being allocated. This implies there exists an optimal static order.
	
	For this, fix prices $\p$.
	For player $i$, let $v_i = \E\left[v\sim F_i|v> p_i \right]$, and let $q_i= \Pr[v\sim F_i> p_i]$.
	Given a set of agents $S$ and a number of items $m$, let $V(S,m)$ be the expected welfare of the optimal (possibly adaptive) ordering for set $S$ and $m$ items. We prove by induction that for any set $S$ and any $m$, it is always (weakly) better to have an agent in $\argmax_{i\in S}v_i$ first. For two buyers, 1 and 2, where $v_1>v_2$, this is true. If $m\geq 2$, it doesn't matter who goes first. If $m=1$, then the expected welfare from having 1 go first is $q_1v_1 + (1-q_1)q_2v_2$, and symmetrically, if agent 2 goes first, we have an expected welfare of $q2v_2 + (1-q_2)q_1v_1$. We have that the difference between the first and the second terms is $q_1q_2v_1-q_1q_2v_2>0$, which implies it is better to have agent 1 go first.
	
	For an arbitrary $S$ and $m$, we compare the benefit of having agent $1\in \argmax_{i\in S} v_i$ going first to an agent $j$ with $v_j< v_1$.  First, we show the claim for $m=1$. The expected welfare from having 1 go first is 
	\begin{eqnarray*}
		q_1v_1  + (1-q_1)V(S\setminus\{1\},1)\ \geq\  q_1v_1 + (1-q_1)q_jv_j + (1-q_1)(1-q_j)V(S\setminus\{1,j\},1).
	\end{eqnarray*}
    \noindent	
	The expected welfare from having agent $j$ go first is 
	\begin{eqnarray*}q_jv_j + (1-q_j)V(S\setminus\{j\},1)\ =\ q_jv_j + (1-q_j)q_1v_1 + (1-q_1)(1-q_j)V(S\setminus\{1,j\},1),\end{eqnarray*}
	where the equality follows from the induction hypothesis, where we assume it's optimal for agent 1 to go first for a smaller set.
	Subtracting the second term from the first term, we get  $q_1 q_j v_1 - q_1 q_j v_j > 0,$ implying it's strictly better for agent 1 to go first. 
	
	If $m\geq 2$, then letting agent $1$ go first, we get an expected welfare of 
	\begin{eqnarray*}q_1v_1 + q_1V(S\setminus\{1\},m-1)+ (1-q_1)V(S\setminus\{1\},m) &\geq& q_1v_1\ +\  q_1q_j(v_j+V(S\setminus\{1,j\},m-2)) \\ & &\ +\  q_1(1-q_j)V(S\setminus\{1,j\},m-1) \\& & \ +\  (1-q_1)q_j(v_j+V(S\setminus\{1,j\},m-1))\\& &\ +\  (1-q_1)(1-q_j)V(S\setminus\{1,j\},m).
	\end{eqnarray*}
	Letting agent $j$ go first results in an expected welfare of 
	\begin{eqnarray*}
		q_jv_j + q_jV(S\setminus\{j\},m-1)+ (1-q_j)V(S\setminus\{j\},m)  
		& = & q_jv_j\ +\ q_1q_j(v_1+V(S\setminus\{1,j\},m-2)) \\ & &+\ q_j(1-q_1)V(S\setminus\{1,j\},m-1) \\& & +\ (1-q_j)q_1(v_1+V(S\setminus\{1,j\},m-1))\\& &+\ (1-q_1)(1-q_j)V(S\setminus\{1,j\},m),
	\end{eqnarray*}
	where the equality follows the induction hypothesis.
	Subtracting the second term from the first, we get a difference in welfare of at least 
	$$q_1v_1 + q_1q_jv_j + (1-q_1)q_jv_j- q_jv_j + q_1q_j v_1 + (1-q_j)q_1 v_1 = 0,$$ which implies it's weakly better to have agent 1 go first.
	
	Since it's always weakly better to have the agent with the highest value  go first, it implies there exists an optimal static ordering, where agents arrive according to descending values.
\end{proof}

\begin{proposition}
	\label{dynamic_order_diff}
	There exists a unit-demand setting with 2 non-identical items and three agents with independent valuations where the optimal \MM \,must use adaptive order.
\end{proposition}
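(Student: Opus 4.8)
The plan is to build a small instance in which one agent's \emph{choice} of item (not merely her decision to buy) is the only informative signal the mechanism ever receives, and then to show that no static order can react to it. I would take three agents and two items $A,B$, with all valuations independent and: agent~$1$ having value $H$ (take $H=100$) for a single item whose identity is drawn uniformly from $\{A,B\}$, and value $0$ for the other item; agent~$2$ having value $\mathrm{unif}\{1,8\}$ for $A$ and deterministic value $3$ for $B$; and agent~$3$ the mirror image, deterministic value $3$ for $A$ and $\mathrm{unif}\{1,8\}$ for $B$. As in the proof of Proposition~\ref{dynamic_order_price}, since a visiting agent's distribution is known it is without loss to restrict item prices to $\{0,5,H+1\}$. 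The construction is symmetric under simultaneously swapping $A\leftrightarrow B$ and agent~$2\leftrightarrow$ agent~$3$, which reduces the six possible orders to the three representatives $(1,2,3)$, $(2,1,3)$ and $(2,3,1)$.

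First I would exhibit an \emph{order-adaptive} mechanism of expected welfare $H+5.5$: visit agent~$1$ at price $0$ (she takes her valued item, contributing $H$); then, on whichever single item is left, visit the agent who has the $\{1,8\}$ distribution for that item at price $5$, and if she declines (value $1$) sell to the other agent at price $0$ (value $3$) --- that branch is worth $\tfrac12\cdot8+\tfrac12\cdot3=5.5$. A short enumeration over the three admissible prices confirms $5.5$ is the best achievable for a single item shared by agents~$2$ and~$3$, while visiting the deterministic-value agent first yields only $4.5$; this gap is exactly why the continuation must know which item survived, i.e.\ must be order-adaptive.

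Next I would upper bound each static order strictly below $H+5.5$. For $(1,2,3)$: agent~$1$ collects $H$ (else her expected contribution is at most $H/2$ and the total at most $H/2+16<H+5.5$), after which agents~$2$ and~$3$ are visited in a fixed order, and in exactly one of the two equiprobable branches the deterministic-value agent is forced ahead of the $\{1,8\}$-value agent, capping that branch at $4.5$; hence the total is at most $H+\tfrac12(5.5+4.5)=H+5$. For $(2,3,1)$: agents~$2$ and~$3$ act only in rounds~$1$--$2$, so by linearity their expected contribution is at most $8s$, where $s$ is the expected number of items they sell then, while agent~$1$'s preferred item is independent of what they did, so she obtains it with probability $1-s/2$ and contributes at most $H(1-s/2)$; the total is at most $H(1-s/2)+8s\le H$ for $H>16$. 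For $(2,1,3)$: if agent~$2$ sells nothing in round~$1$ then agent~$1$ takes her item and agent~$3$ is last holding precisely the item agent~$1$ rejected, worth $\tfrac12\cdot4.5+\tfrac12\cdot3=3.75$ in expectation, for a total of at most $H+3.75$; otherwise agent~$2$ sells in round~$1$ with probability at least $\tfrac12$, so agent~$1$'s expected contribution drops to at most $\tfrac34H$, and adding the at-most-$8$ contributions of agents~$2$ and~$3$ each, the total is at most $\tfrac34H+16<H+5.5$ for $H>42$. With $H=100$ every bound is strictly below $H+5.5$, so no static-order mechanism is optimal, and since the order-adaptive mechanism above attains $H+5.5$ the welfare-optimal $\MM$ must use an adaptive order.

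The main obstacle is this last step, and within it the order $(2,1,3)$ with agent~$1$ in the middle, where the mechanism is most flexible: it may probe agent~$2$ on either item in round~$1$ and then adapt prices in rounds~$2$--$3$, so one must argue that any such probing either costs agent~$1$ her (very valuable) item with probability bounded away from $0$, or --- when no probing occurs --- strands agent~$3$ with the low-value leftover; the price discretization keeps this a finite check, and taking $H$ large is what forces both horns of every such trade-off to land below $H+5.5$.
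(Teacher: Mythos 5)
Your construction is essentially the paper's: a first agent whose choice between the two items reveals which item remains, followed by two mirror-image agents, each with high upside for a different item, so that the optimal continuation must visit the high-upside agent for the surviving item first. The proof is correct (your exhaustive bounding of the three representative static orders is in fact more detailed than the paper's argument, which only sketches why no static order works), so this is the same approach carried out with a slightly different but equivalent parameterization.
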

\begin{proof}
	There are three agents: blue, red, and yellow. There are two items: red and yellow. With prob.\ 1/2, the blue agent has value 15 for the red item and 1 for the yellow item, and with prob.\ 1/2 it's the other way around. The red agent's value  for the red item is drawn uniformly from $\{12,3\}$ and its value for the yellow item is drawn uniformly from $\{8,2\}$. The yellow agent is the same as the red except the distributions are switched, so $\{12,3\}$ for  the yellow item and $\{8,2\}$ for  the red item. The welfare-optimal policy is to have the blue agent go first, with any price $p<15$ set for each of the items. One of the items will be sold. If the red item remains, visit the red agent next, with price $3<p<12$, and if this agent does not buy this item, visit the yellow agent with price $p<2$. Do the opposite if it is the yellow item that remains after the blue agent goes.
\end{proof}

\medskip

Intuitively, after the blue agent goes, we want to visit the agent who is the same color as the remaining item because they have the highest potential value. It is insufficient to use a static because we don't know for sure which of the yellow or red agents has the higher value for the item.

\end{document}